\documentclass[journal,comsoc]{IEEEtran}
\usepackage[T1]{fontenc}
\usepackage[utf8]{inputenc}
\usepackage{amsmath,amssymb,amsfonts,amsthm}
\usepackage{mathrsfs}
\usepackage{enumitem}
\usepackage{graphicx}
\usepackage{textcomp}
\usepackage{xcolor}
\usepackage{booktabs}
\usepackage[sort,compress]{cite}
\usepackage[ruled,vlined]{algorithm2e}
\usepackage[justification=centering,labelsep=period]{caption}
\renewcommand{\thetable}{\arabic{table}}

\newtheorem{proposition}{Proposition}

\begin{document}

\title{Distributed Quantum-Assisted Robust AoII Minimization in Satellite-Ground Integrated Edge Networks}
\author{Mohammad~Arif~Hossain,~\IEEEmembership{Member,~IEEE}, Tanzimul~Alam~Fahim,~Weiqi~Liu,~\IEEEmembership{Member,~IEEE}, and~Nirwan~Ansari,~\IEEEmembership{Life~Fellow,~IEEE}%
\thanks{Mohammad~Arif~Hossain is with the QRISE Center and Department of Engineering Technology, Middle Tennessee State University (MTSU), TN 37132 USA (e-mail: Mohammad.Hossain@mtsu.edu), Tanzimul~Alam~Fahim is with the Computational Science PhD Program at MTSU, Murfreesboro, TN 37132 USA, Weiqi Liu is with the Department of Computer Science and Computer Information Systems, Auburn University at Montgomery, AL 36117 USA, and Nirwan Ansari is with the Department of Electrical and Computer Engineering, New Jersey Institute of Technology, NJ 07102 USA.}}

\maketitle

\begin{abstract}
Mission-critical edge applications in 6G-and-beyond networks, such as autonomous systems, disaster response, and infrastructure monitoring, require that the edge decision-maker's estimate of a monitored process remain correct, not merely up to date. Satellite-ground integrated networks (SAGIN) often provide the only connectivity in infrastructure-limited or disaster-affected regions, yet satellite handover and shadowing interrupt links, during which the process may change state several times, leaving the edge node's estimate substantially wrong. Age of information (AoI) tracks only elapsed time and cannot distinguish a harmless delay from a dangerous error. We instead adopt the age of incorrect information (AoII), which penalizes both the duration and magnitude of estimation error, and formulate, to our knowledge, the first robust, network-level, multi-node AoII minimization problem over SAGIN under stochastic handover and shadowing. We propose SENTINEL, a distributed hybrid quantum-classical framework that jointly schedules update rates, satellite-to-base-station associations, and bandwidth allocation to minimize the worst-case time-average AoII. Because AoII is history dependent, it resists per-slot optimization; a renewal-interval decomposition that separates source dynamics from channel disruption yields a closed-form AoII cost per inter-delivery interval. The resulting robust scheduling problem, VANGUARD, is formulated as a QUBO, mapped to an Ising Hamiltonian, and solved via distributed QAOA with ADMM-based coordination across satellite and ground domains. Every returned schedule carries a certified worst-case AoII over all disruption scenarios. Simulations show that SENTINEL outperforms the learning-based baseline at every network size and random scheduling in most settings, matches a state-aware threshold policy in small networks while additionally providing a worst-case guarantee, and remains close to an exact minimax reference.
\end{abstract}

\begin{IEEEkeywords}
6G, age of incorrect information (AoII), alternating direction method of multipliers (ADMM), mission-critical edge computing, quantum approximate optimization algorithm (QAOA), remote estimation, robust optimization, satellite-ground integrated network (SAGIN), semantic communication
\end{IEEEkeywords}

\IEEEpeerreviewmaketitle

%=======================================================================
\section{Introduction}
%=======================================================================

Next-generation (6G-and-beyond) wireless networks are envisioned to push intelligence and autonomy to the network edge~\cite{Saad2020}, supporting mission-critical applications such as autonomous and tactical systems, disaster response, industrial control, digital twins, and remote infrastructure monitoring. These applications require edge nodes to maintain an accurate belief about a physical process, such as the state of a disaster zone, the condition of infrastructure, or the position of a tactical asset, and to act on it without human verification. The value of information is therefore determined not only by how quickly it arrives, but by whether the edge decision-maker's estimate remains correct. Terrestrial mobile edge computing (MEC) offers high capacity and low delay, but its availability is bounded by ground infrastructure, which is often scarcest where mission-critical monitoring is needed most. Satellite networks can extend connectivity and computing capabilities to infrastructure-limited, contested, or disaster-affected regions, motivating satellite-ground-integrated edge networks as a candidate architecture for resilient, correctness-aware edge intelligence~\cite{Giordani2021}.

In a satellite-ground integrated network (SAGIN), edge nodes (UEs) transmit status updates to edge servers hosted at terrestrial base stations (BSs) or satellites (SATs). BSs offer stable, low-delay links, whereas SATs provide coverage where terrestrial infrastructure is limited or destroyed. Satellite links, however, are subject to stochastic disruption from orbital dynamics, atmospheric shadowing, Doppler shifts, and handover. For mission-critical monitoring, such a disruption is not merely a delay problem: while a link is unavailable, the monitored process continues to evolve and may pass through several states before connectivity is restored, leaving the edge with an estimate that is not simply stale but possibly substantially incorrect, often without the edge node being aware of it. This distinction between \emph{staleness} and \emph{incorrectness} is invisible to age of information (AoI)~\cite{Kaul2012, Yates2021}, which tracks only the elapsed time since the last received update and treats every unit of delay as equally costly, regardless of whether the process has changed.

Age of incorrect information (AoII)~\cite{Maatouk2020, Maatouk2023} captures this distinction: it remains zero while the edge estimate matches the true state and accrues only while the two diverge, weighted by the severity of the mismatch. AoII has so far been studied mainly for a single source and monitor over a point-to-point channel, typically under threshold or semi-Markov policies. Extending it to SAGIN raises two open challenges. First, AoII depends on the entire trajectory of a Markovian source since the estimate last diverged, making it structurally incompatible with the per-slot combinatorial formulations used in prior SAGIN resource allocation. Second, in SAGIN, the duration and severity of estimation error are driven by handover- and shadowing-induced outages that can span multiple state transitions, so the allocation policy must be robust to the worst-case outage pattern rather than its expected behavior.

Classical methods can approximate robust solutions through relaxation, decomposition, or robust counterparts, but they typically require accurate disruption models and become computationally prohibitive when scheduling, association, and worst-case outage selection are optimized jointly. Reinforcement learning (RL) adapts without a complete analytical model, but it optimizes expected cumulative reward rather than worst-case estimation error, offers no certified bound on how wrong an edge belief can become during an adverse disruption, and is difficult to train reliably over the resulting combinatorial decision space. This motivates an approach that directly exploits the combinatorial binary structure of the robust AoII problem.

To this end, we propose SENTINEL, a distributed hybrid quantum-classical framework for correctness-aware edge orchestration in SAGIN under stochastic handover and shadowing uncertainty, which directly targets the worst-case time-average AoII across all monitored edge nodes. The key enabler is a renewal-interval decomposition: since each monitored source evolves as a Markov chain with known dynamics, the expected AoII accumulated between two successful updates can be computed offline in closed form as a function of the inter-delivery interval alone. This decouples source dynamics from channel disruption, so the only quantity the network must reason about under uncertainty is the channel-induced inter-delivery interval, which fits naturally into a discretized-scenario, binary-variable robust formulation. The resulting robust scheduling problem, VANGUARD, can thus be formulated as a quadratic unconstrained binary optimization (QUBO) problem, mapped to an Ising Hamiltonian, and solved via the quantum approximate optimization algorithm (QAOA).

To remain practical on near-term quantum hardware~\cite{Preskill2018}, where a centralized VANGUARD QUBO would require prohibitively many qubits, SENTINEL partitions the SAGIN into local satellite-ground domains. Each local controller solves a sub-QUBO capturing its own disruption uncertainty and resource constraints, while a classical ADMM-based coordinator reconciles the bandwidth of satellites shared by several domains. Every returned schedule is evaluated exactly under all disruption scenarios, so its worst-case AoII is certified regardless of whether the quantum solver reaches the global optimum.

The main contributions of this paper are summarized as follows:
\begin{itemize}[leftmargin=13pt]
\item We propose SENTINEL, a distributed hybrid quantum-classical framework for \underline{S}atellite-ground \underline{E}dge \underline{N}etwork orchestration for \underline{T}rustworthy \underline{I}ntelligence and \underline{N}ode-level \underline{E}stimation via quantum-classica\underline {l} coordination, which addresses stochastic satellite handover and shadowing uncertainty through robust minimax AoII optimization.

\item We formulate the robust scheduling problem VANGUARD, short for \underline{V}alue-of-information-aware, uncert\underline{A}inty-robust, \underline{N}etwork-wide \underline{G}eneration, \underline{U}pdate-scheduling, \underline{A}ssociation, and \underline{R}esource orchestration under \underline{D}isruption. VANGUARD minimizes the worst-case time-average AoII over a bounded uncertainty set that captures shadowing, Doppler, and handover failures, subject to bandwidth, visibility, update rate, and AoII budget constraints.

\item We derive a renewal-interval decomposition that separates each source's Markovian dynamics from the satellite channel's disruption pattern, yielding a closed-form expected AoII cost per inter-delivery interval and reconciling AoII's history dependence with per-slot combinatorial optimization.

\item We discretize update rates, association, bandwidth levels, and the uncertainty set into a unified robust QUBO, map it to an Ising Hamiltonian, and solve it via a distributed hybrid QAOA in which local satellite-ground domains solve sub-QUBOs in parallel under ADMM-based classical consensus.

\item We evaluate SENTINEL under dynamic SAGIN settings with stochastic handover and shadowing, comparing it with random, threshold-based, and learning-based baselines and validating its solution quality against an exact minimax reference.
\end{itemize}

The remainder of this paper is organized as follows. Section~II reviews related work. Section~III presents the system model, including the Markovian source, the AoII renewal-interval decomposition, and the disruption-aware communication and orchestration models. Section~IV formulates VANGUARD and its binary reformulation. Section~V develops the QUBO, Ising, and distributed QAOA solution. Section~VI presents simulation results, and Section~VII concludes the paper.

%=======================================================================
\section{State of the Art}
%=======================================================================

This paper lies at the intersection of satellite-ground integrated edge computing, freshness- and semantics-aware information metrics~\cite{Kosta2017, Strinati2021, Xie2021}, robust resource allocation, learning-based scheduling, and quantum-assisted combinatorial optimization, which we review in turn.

\subsubsection{Satellite-Ground Integrated Edge Computing}

Satellite-ground integrated edge computing extends MEC~\cite{Mao2017} beyond terrestrial coverage by equipping satellites and base stations with edge servers~\cite{Liu2018, Kato2019, Zhang2019}, and must contend with heterogeneous delays, orbital topology dynamics, limited onboard resources, and stochastic link degradation. Most existing works focus on task offloading and latency under deterministic or average-case channel assumptions~\cite{Cheng2019, Zhou2021}; none treat the correctness of the delivered information, rather than its delivery speed, as the primary objective.

\subsubsection{Age of Information and Age of Incorrect Information}

AoI has been widely studied as a timeliness metric in wireless and IoT networks~\cite{Kaul2012, Kosta2017, Sun2017, Kadota2018, Yates2021}, but it is content-agnostic, penalizing elapsed time regardless of whether the monitored process has changed. AoII addresses this by jointly penalizing the duration and magnitude of estimation error~\cite{Maatouk2020}, and has been analyzed for single-source, single-monitor systems under threshold and semi-Markov policies~\cite{Maatouk2023,Kriouile2021} and over unreliable, power-constrained channels~\cite{Chen2021}. It has not yet been formulated as a robust multi-node problem that jointly determines association, bandwidth, and update rates across heterogeneous serving nodes, nor has it been examined under satellite outages spanning multiple source state transitions.

\subsubsection{Robust Resource Allocation Under Satellite Link Uncertainty}

Robust optimization minimizes cost against the worst-case realization of an uncertain parameter within a bounded set~\cite{BenTal2009, Bertsimas2004}, and has been applied to power control, beamforming, and scheduling under channel uncertainty. Robust allocation for satellite-ground MEC remains comparatively unexplored, and existing formulations treat uncertainty as a channel or delay perturbation rather than as a driver of estimation error at the edge.

\subsubsection{Learning-Based Scheduling for Freshness}

Reinforcement learning~\cite{Mnih2015, Schulman2017} has been applied to AoI- and AoII-aware scheduling in point-to-point and small multi-user systems. However, it optimizes the expected reward, requires extensive training on representative disruption patterns, and provides no certified bound on the estimation error under adverse or unseen disruptions.

\subsubsection{Quantum-Assisted Combinatorial Optimization}

QUBO formulations~\cite{Lucas2014, Glover2019} solved via QAOA~\cite{Farhi2014, Cerezo2021} or quantum annealing~\cite{Albash2018} have been applied to channel assignment, scheduling, and resource allocation in wireless networks~\cite{Kim2019}. This paradigm suits robust minimax problems: once the uncertainty set is discretized, adversarial scenario selection is itself combinatorial and can be encoded jointly with the resource decision in a single QUBO. However, QAOA has not been applied to information freshness or correctness objectives, nor has it been combined with a distributed architecture for satellite-ground networks.

\subsubsection{Research Gap and Motivation}

To the best of our knowledge, no existing work (i) formulates AoII as a robust minimax, multi-node scheduling problem with joint association and resource allocation; (ii) targets information correctness, rather than delay or throughput, in robust SAGIN resource allocation; or (iii) applies quantum-assisted optimization to freshness- or correctness-aware scheduling. SENTINEL addresses all three by casting worst-case AoII minimization under satellite handover and shadowing uncertainty as a QUBO solved via distributed hybrid QAOA with ADMM-based classical coordination.

%=======================================================================
\section{System Model}
%=======================================================================

\begin{figure}[!t]\centering
\includegraphics[width=0.9\columnwidth]{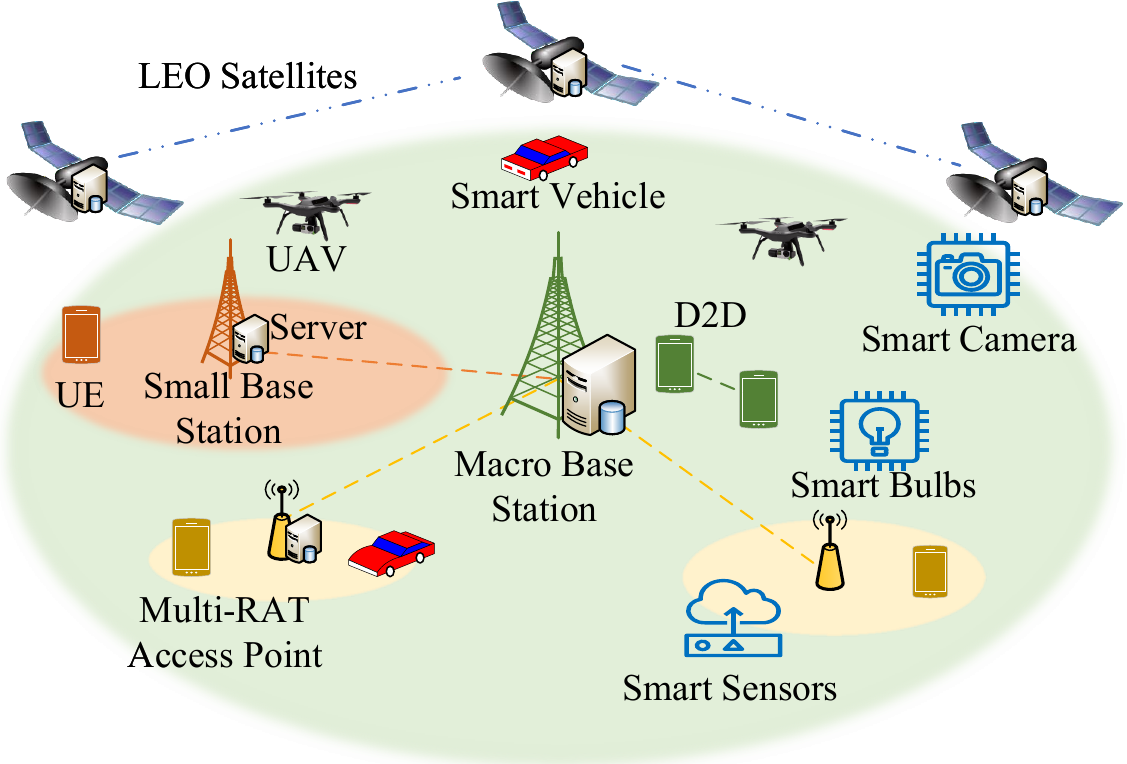}
\caption{System model for correctness-aware monitoring over SAGIN under satellite disruption.}
\label{Fig1}
\end{figure}

We consider a satellite-ground-integrated edge network in which distributed edge nodes monitor time-varying physical processes and transmit status updates to remote edge servers to support mission-critical decision-making under stochastic satellite link disruptions. Let $\mathscr{U}$, $\mathscr{S}$, and $\mathscr{B}$ denote the sets of monitoring edge nodes (UEs), satellites (SATs), and ground base stations (BSs), respectively, with UEs indexed by $u_i$, SATs by $s_j$, and BSs by $b_n$. Each UE observes a physical process of operational relevance, such as the condition of critical infrastructure, a hazard indicator in a disaster-affected area, or the status of a tactical asset, and must deliver timely and correct status updates to an edge server hosted at a BS or a visible SAT. Both SATs and BSs host edge servers that maintain a running estimate of each monitored process and act on it. BS-hosted servers benefit from stable, low-delay links, while SAT-hosted servers extend coverage to regions where terrestrial infrastructure is limited, damaged, or unavailable. Time is slotted, $t\in\{0,1,2,\ldots\}$, with slot duration $T_s$.

Each UE may transmit its status updates to either a BS or a visible SAT, with SAT visibility governed by the elevation angle. UEs are assumed to have low mobility over the decision interval, whereas the satellite-ground topology is dynamic due to orbital motion. Unlike conventional SAGIN edge computing models, in which the quantity of interest is task completion delay, SENTINEL targets the \emph{correctness} of the estimate held by the edge server, which depends jointly on (i) the intrinsic dynamics of the monitored process, (ii) when each UE samples and transmits an update, and (iii) whether that update is delivered before the process changes state again. Satellite handover and shadowing introduce stochastic disruption into the delivery process, during which the monitored process may pass through multiple states before the next update succeeds. As a result, allocation decisions made under a nominal channel assumption can leave the edge server with a substantially incorrect estimate rather than merely a delayed one.

SENTINEL adopts a distributed hybrid quantum-classical decision architecture rather than centralized optimization or learning-based orchestration. The network is partitioned into $K$ local satellite-ground domains indexed by $\mathscr{K}=\{1,2,\ldots,K\}$, where the $k$-th domain is
\begin{equation}
    \mathcal{D}_{k}=\left(\mathscr{U}_{k},\,\mathscr{S}_{k},\,
    \mathscr{B}_{k}\right),
\end{equation}
with $\mathscr{U}_{k}\subseteq\mathscr{U}$, $\mathscr{S}_{k}\subseteq\mathscr{S}$, and $\mathscr{B}_{k}\subseteq\mathscr{B}$ the UEs, visible SATs, and available BSs in domain $k$. The UE subsets $\{\mathscr{U}_k\}_{k\in\mathscr{K}}$ form a partition of $\mathscr{U}$, whereas a SAT may be visible to, and therefore shared by, several domains; this shared resource is the source of inter-domain coupling. Each domain has a classical controller that observes the local network and disruption state, constructs a local robust binary optimization problem targeting worst-case AoII, and solves it with a hybrid QAOA solver. A classical ADMM-based coordinator then reconciles the bandwidth that the domains use at shared SATs.

\begin{table}[!t]
\centering
\caption{Key Notations}
\label{tab:notation}
\footnotesize
\renewcommand{\arraystretch}{1.15}
\begin{tabular}{|p{0.27\columnwidth}|p{0.61\columnwidth}|}
\hline
\textbf{Symbol} & \textbf{Definition} \\
\hline
$\mathscr{U}$, $\mathscr{S}$, $\mathscr{B}$ & Sets of UEs, SATs, and BSs \\
\hline
$\mathscr{M}=\mathscr{S}\cup\mathscr{B}$ & Set of serving nodes \\
\hline
$\mathscr{K}$, $\mathcal{D}_k$ & Set of domains; $k$-th domain \\
\hline
$\alpha_{ij}$, $\omega_{in}$ & Association of $u_i$ with $s_j$, $b_n$ \\
\hline
$\beta^{us}_{ij}$, $\beta^{ub}_{in}$ & Bandwidth allocated to $u_i$ by $s_j$, $b_n$ \\
\hline
$B^{s}_{j}$, $B^{b}_{n}$ & Bandwidth capacity of $s_j$, $b_n$ \\
\hline
$R^{us}_{ij}(\ell)$, $R^{ub}_{in}$ & Uplink rate of $u_i$ to $s_j$ (scenario $\ell$), to $b_n$ \\
\hline
$V_{ij}$, $\Theta$ & Elevation angle of $s_j$ at $u_i$; minimum elevation \\
\hline
$\mathscr{L}$, $\Omega$ & Scenario set ($L$ scenarios); uncertainty set \\
\hline
$T^{\mathrm{dis}}_{ij}(\ell)$, $\tau_{imq}(\ell)$ & Disruption window; induced outage (slots) \\
\hline
$T_s$, $D_i$ & Slot duration; status-update size (bits) \\
\hline
$X_i(t)$, $\hat{X}_i(t)$ & Monitored source of $u_i$; edge estimate \\
\hline
$\mathcal{X}_i$, $P_i$, $\pi_i$ & State space, transition matrix, stationary distribution \\
\hline
$f_i(x,\hat{x})$ & Distortion function \\
\hline
$\Delta_i^{II}(t)$, $\Delta_i^{II,\max}$ & Instantaneous AoII; AoII budget \\
\hline
$t_{i,\kappa}$, $Y_{i,\kappa}$ & $\kappa$-th delivery slot; inter-delivery interval \\
\hline
$\Psi_i(Y)$ & Expected AoII accumulated over an interval of length $Y$ \\
\hline
$\epsilon_i$ & Orchestration overhead of $u_i$ (slots) \\
\hline
$\rho_i$, $\bar\rho_i^{h}$ & Update rate of $u_i$; $h$-th rate level ($H_i$ levels) \\
\hline
$\bar\beta_m^{q}$ & $q$-th bandwidth level of node $m$ ($Q_m$ levels) \\
\hline
$x_{imqh}$ & $1$ if $u_i$ uses node $m$, bandwidth level $q$, rate level $h$ \\
\hline
$\mathscr{A}_i$ & Admissible combinations $(m,q,h)$ of $u_i$ \\
\hline
$\psi^{(\ell)}_{imqh}$ & Time-average AoII cost under scenario $\ell$ \\
\hline
$\eta$, $\eta_k$ & Network and local worst-case AoII (epigraph) \\
\hline
$z_{jk}$, $u_{jk}$, $\varrho$ & ADMM bandwidth target, scaled dual, penalty \\
\hline
\end{tabular}
\end{table}

\subsection{Communications Model}

We adopt a standard SAGIN physical-layer geometry and channel model, since satellite handover and shadowing affect the delivery process regardless of whether the payload is a computation task or a status update. UE, BS, and SAT positions are converted from geodetic to earth-centered, earth-fixed (ECEF) coordinates to compute inter-node distances, and the nominal path loss (in dB) between UE $u_i$ and SAT $s_j$ is
\begin{equation}
    \varphi^{us}_{ij}=\varsigma^{us}+\zeta^{us}\log_{10}\!\left(f_{c}^{us}
    \right)+\xi^{us}\log_{10}\!\left(d^{us}_{ij}\right),
\end{equation}
where $d^{us}_{ij}=\|\mathbf{r}^{u}_{i}-\mathbf{r}^{s}_{j}\|$ is the Euclidean distance between $u_i$ and $s_j$ in ECEF coordinates, $f_c^{us}$ is the carrier frequency, $\varsigma^{us}$ is the reference path loss, and $\zeta^{us}$ and $\xi^{us}$ are the frequency- and distance-dependence coefficients. The UE-BS path loss $\varphi^{ub}_{in}$ is computed analogously using the haversine ground distance $d^{ub}_{in}$, and BS links are assumed terrestrial and stable. The nominal uplink data rate from $u_i$ to $s_j$ is
\begin{equation}
    R^{us}_{ij}=\beta_{ij}^{us}\log_{2}\!\left(1+
    \frac{p_{i}\cdot 10^{-\varphi^{us}_{ij}/10}}{N_0\beta^{us}_{ij}+\upsilon}
    \right),
    \label{eq:rate_us}
\end{equation}
with the UE-BS rate $R^{ub}_{in}$ defined analogously, where $p_i$ is the UE transmit power, $N_0$ is the noise power spectral density, and $\upsilon$ is the aggregate co-channel interference, which we treat as a fixed worst-case bound so that the rate of each link does not depend on the association decisions of other UEs. SAT $s_j$ can serve $u_i$ only if $V_{ij}\geq\Theta$, where $V_{ij}=\arcsin(\mu_{ij})$ and $\mu_{ij}$ is the inner product between the unit line-of-sight vector from $u_i$ to $s_j$ and the unit zenith vector at $u_i$'s location.

\subsection{Monitored Source and Estimation Model}

Each UE $u_i$ observes a physical process $X_i(t)$ modeled as a discrete-time Markov chain over a finite state space $\mathcal{X}_i$ with transition matrix $P_i$, where each state represents an operationally meaningful severity or status level (e.g., a hazard tier, a structural-integrity level, or an asset status). Each UE controls a generation indicator $g_i(t)\in\{0,1\}$ that specifies whether a status update is sampled and transmitted in slot $t$. We assume that $T_s$ exceeds the maximum propagation and transmission delay of a successful update, so that an update sampled at the start of slot $t$ is either delivered within that slot or lost. Let $t_{i,\kappa}$ denote the $\kappa$-th slot in which an update from $u_i$ is successfully delivered, and $Y_{i,\kappa}=t_{i,\kappa+1}-t_{i,\kappa}$ the corresponding inter-delivery interval. The edge server assigned to $u_i$ maintains the estimate
\begin{equation}
    \hat{X}_i(t)=X_i(t_{i,\kappa}),\qquad t_{i,\kappa}\leq t<t_{i,\kappa+1},
\end{equation}
which remains fixed between deliveries, while the true process may continue to evolve. Let $\Lambda_i(t)=\max\{\tau\leq t: X_i(\tau)=\hat{X}_i(\tau)\}$ denote the most recent slot at which the estimate was correct. Following~\cite{Maatouk2020}, the instantaneous AoII of $u_i$ is
\begin{equation}
    \Delta_i^{II}(t) = \bigl(t-\Lambda_i(t)\bigr)\, f_i\!\left(X_i(t),\hat{X}_i(t)\right),
    \label{eq:aoii_def}
\end{equation}
where $f_i(x,\hat{x})\geq 0$, with $f_i(x,x)=0$, is a distortion function capturing the operational severity of a mismatch (e.g., $f_i(x,\hat{x})=|x-\hat{x}|$ for an ordered severity chain). Unlike AoI, $\Delta_i^{II}(t)$ returns to zero without a new delivery whenever $X_i(t)$ revisits $\hat{X}_i(t)$, and it can become positive immediately after a delivery if the source moves away from the delivered value. The scheduling decisions $\{g_i(t)\}$, together with the association and resource allocation decisions, determine the delivery slots $\{t_{i,\kappa}\}_{\kappa\geq 1}$ and hence the AoII trajectory of $u_i$.

\subsection{Link Disruption and Renewal Model}

Satellite links are subject to stochastic degradation due to atmospheric shadowing, Doppler-induced residual frequency offsets, and handover interruptions. The realized path loss under the disrupt scenario $\ell$ is
\begin{equation}
    \tilde{\varphi}^{us}_{ij}(\ell)
    = \varphi^{us}_{ij} + \delta^{us}_{ij}(\ell),
    \label{eq:realized_pathloss}
\end{equation}
where $\delta^{us}_{ij}(\ell)=\delta^{\text{sh}}_{ij}(\ell)+\delta^{\text{dp}}_{ij}(\ell)+\delta^{\text{ho}}_{ij}(\ell)$ aggregates the shadowing, Doppler, and handover components, and each scenario also specifies the length $T^{\mathrm{dis}}_{ij}(\ell)$, in slots, of the window over which this perturbation applies. The bounded uncertainty set $\Omega$ is discretized into $L$ scenarios $\mathscr{L}=\{1,\ldots,L\}$ spanning the nominal, single-link worst-case, and global worst-case realizations. The corresponding scenario rate $R^{us}_{ij}(\ell)$ follows from~\eqref{eq:rate_us} with $\varphi^{us}_{ij}$ replaced by $\tilde{\varphi}^{us}_{ij}(\ell)$, and an update sampled in slot $t$ is delivered under scenario $\ell$ only if $R^{us}_{ij}(\ell)\,T_s\geq D_i$ and $V_{ij}\geq\Theta$, where $D_i$ is the size of a status update of $u_i$ in bits.

In a task-offloading setting, scenario $\ell$ mainly perturbs the achievable data rate. In SENTINEL, it also determines the length of the induced outage or degraded-throughput window and therefore the inter-delivery interval $Y_{i,\kappa}(\ell)$, which is the number of slots from $t_{i,\kappa}$ until the next slot in which $g_i(t)=1$ and the delivery condition above holds. A more severe scenario yields a longer $Y_{i,\kappa}(\ell)$, over which $X_i(t)$ may pass through several states before the next delivery succeeds. This coupling between the channel scenario and the inter-delivery interval is what ties the disruption model to the AoII objective.

\subsection{Renewal-Interval AoII Decomposition}

Because $\Delta_i^{II}(t)$ depends on the full trajectory of $X_i(t)$ since the estimate was last correct, it cannot be written as a per-slot function of the current channel state, and therefore cannot be substituted directly into a per-slot QUBO cost coefficient in the way task completion time is in SAGIN offloading formulations. We resolve this by exploiting the renewal structure of the delivery process. Under the same-slot delivery assumption of Section~III-B, the estimate is exact at each delivery slot, so $\Delta_i^{II}(t_{i,\kappa})=0$. Between two consecutive deliveries at $t_{i,\kappa}$ and $t_{i,\kappa+1}=t_{i,\kappa}+Y_{i,\kappa}$, the estimate is fixed at $\hat{x}=X_i(t_{i,\kappa})$ and the AoII evolves according to the source's own Markov dynamics, independently of the channel.

To evaluate this evolution, let $\tilde{P}_{i,\hat{x}}=P_i\bigl(\mathbf{I}-\mathbf{e}_{\hat{x}}\mathbf{e}_{\hat{x}}^{\top}\bigr)$ denote the taboo matrix obtained by setting the $\hat{x}$-th column of $P_i$ to zero, so that $[\tilde{P}_{i,\hat{x}}^{a}]_{\hat{x}y}$ is the probability of moving from $\hat{x}$ to $y$ in $a$ steps without revisiting $\hat{x}$. Let $\mathbf{f}_{i,\hat{x}}$ be the vector with entries $f_i(y,\hat{x})$, $y\in\mathcal{X}_i$. The AoII at $s$ slots after a delivery equals $a\,f_i(y,\hat{x})$ exactly when the source was last at $\hat{x}$ at slot $s-a$ and has since moved to $y$ without returning. Hence
\begin{equation}
    \mathbb{E}\!\left[\Delta_i^{II}(t_{i,\kappa}+s)\,\middle|\,\hat{x}\right]
    =\sum_{a=1}^{s} a\,\bigl[P_i^{\,s-a}\bigr]_{\hat{x}\hat{x}}
    \bigl[\tilde{P}_{i,\hat{x}}^{\,a}\,\mathbf{f}_{i,\hat{x}}\bigr]_{\hat{x}},
    \label{eq:aoii_slot}
\end{equation}
and the expected AoII accumulated over an interval of length $Y$ is
\begin{equation}
    \Psi_i(Y\mid\hat{x})=\sum_{s=1}^{Y-1}
    \mathbb{E}\!\left[\Delta_i^{II}(t_{i,\kappa}+s)\,\middle|\,\hat{x}\right],
    \label{eq:psi_cond}
\end{equation}
and its average over the delivered state is
\begin{equation}
    \Psi_i(Y)=\sum_{\hat{x}\in\mathcal{X}_i}\pi_i(\hat{x})\,\Psi_i(Y\mid\hat{x}),
    \label{eq:psi_def}
\end{equation}
where $\pi_i$ is the stationary distribution of $P_i$. The averaging over $\pi_i$ is valid because the update decisions do not depend on the realized source state, so the delivered state is distributed according to $\pi_i$ when the source is in steady state. Both quantities depend only on $Y$ and $P_i$ and are computed once per UE offline for all $Y\leq Y^{\max}$. By the renewal-reward theorem, the time-average AoII of $u_i$ over an interval of length $Y$ is $\Psi_i(Y)/Y$.

Equation~\eqref{eq:psi_def} decouples the source's intrinsic dynamics from the channel's disruption pattern. The only channel-dependent quantity the network must reason about under uncertainty is the inter-delivery interval, which under scenario $\ell$ is deterministic for a given allocation and is denoted $Y_i(\ell)$. The resulting AoII cost is obtained by evaluating the precomputed ratio $\Psi_i(Y_i(\ell))/Y_i(\ell)$. This is the key step that makes the robust AoII objective compatible with the per-scenario, binary-variable structure of the QUBO formulation developed in Section~V.

\subsection{Quantum-Compatible Resource and Scheduling Block Model}

Hybrid QAOA requires a binary decision space, whereas the update rates $\rho_i$ and the bandwidth allocations $\beta_{ij}^{us}$, $\beta_{in}^{ub}$ are continuous, and the uncertainty set $\Omega$ is continuous prior to discretization. We therefore discretize the bandwidth of each serving node $m\in\mathscr{M}=\mathscr{S}\cup\mathscr{B}$ into $Q_m$ levels $\{\bar\beta_m^1,\ldots,\bar\beta_m^{Q_m}\}$, the update rate of $u_i$ into $H_i$ levels $\{\bar\rho_i^1,\ldots,\bar\rho_i^{H_i}\}\subset[\rho_i^{\min},\rho_i^{\max}]$, and $\Omega$ into the $L$ scenarios of $\mathscr{L}$. We define the unified binary decision variable
\begin{equation}
    x_{imqh}=
    \begin{cases}
        1, & \text{if } u_i \text{ uses node } m\text{, bandwidth}\\
           & \text{level } q\text{, and update-rate level } h, \\
        0, & \text{otherwise},
    \end{cases}
    \label{eq:x_def}
\end{equation}
which jointly encodes association, bandwidth allocation, and update scheduling. The continuous decisions are recovered as $\alpha_{ij}=\sum_{q,h}x_{ijqh}$, $\omega_{in}=\sum_{q,h}x_{inqh}$, $\beta^{us}_{ij}=\sum_{q,h}\bar\beta_j^q\,x_{ijqh}$, $\beta^{ub}_{in}=\sum_{q,h}\bar\beta_n^q\,x_{inqh}$, and $\rho_i=\sum_{m,q,h}\bar\rho_i^h\,x_{imqh}$. The disruption scenario does not index $x_{imqh}$, because it is realized by the environment rather than chosen by the UE. Instead, each variable carries $L$ precomputed AoII cost coefficients, one per scenario, so that no AoII trajectory needs to be simulated online (Section~IV-A). After infeasible combinations are removed (Section~IV-A), the number of binary decision variables is
\begin{equation}
    N_{x}=\sum_{i\in\mathscr{U}}|\mathscr{A}_i|\;\leq\;\sum_{i\in\mathscr{U}}\sum_{m\in\mathscr{M}}Q_{m}H_{i},
    \label{eq:Nx}
\end{equation}
where $\mathscr{A}_i$ is the admissible set of $u_i$. This number grows linearly with the number of UEs, serving nodes, and discrete levels, and quickly exceeds the qubit budget of near-term quantum devices, motivating the distributed domain decomposition developed next.

\subsection{Distributed Hybrid Quantum-Classical Orchestration Model}

Each local domain $\mathcal{D}_k$ constructs a local binary decision vector $\mathbf{x}_{k}=\{x_{imqh}\mid u_i\in\mathscr{U}_k,\,(m,q,h)\in\mathscr{A}_i,\,m\in\mathscr{S}_k\cup\mathscr{B}_k\}$ from a local network state $\mathbf{s}_k=[\boldsymbol{\Psi}_k,\mathbf{V}_k,\mathbf{X}_k,\mathbf{B}_k,\mathbf{E}_k,\boldsymbol{\Xi}_k]$. Here, $\boldsymbol{\Psi}_k$ contains the precomputed AoII cost functions $\Psi_i(\cdot)$ of the local UEs, $\mathbf{V}_k$ the UE-SAT elevation angles, $\mathbf{X}_k$ each UE's transition matrix $P_i$ and AoII budget $\Delta_i^{II,\max}$, $\mathbf{B}_k$ the available bandwidth levels, $\mathbf{E}_k$ the satellite ephemeris data used to predict visibility and handover windows over the decision interval, and $\boldsymbol{\Xi}_k$ the disruption scenario set restricted to domain $k$.
Each local controller assembles a local QUBO targeting worst-case AoII, solves it with a hybrid QAOA solver, and submits its candidate decision to a classical coordinator. The coordinator uses ADMM-based iterations, detailed in Section~V, to reconcile the bandwidth of shared SATs across domains, and evaluates the worst-case AoII of every assembled network schedule exactly.

The quantum solving latency $\tau^{Q}_{k}$ and the consensus latency $\tau^{C}$ delay the moment at which a new allocation takes effect. We conservatively account for this overhead by extending each inter-delivery interval by $\epsilon_i=\bigl\lceil(\tau^{Q}_{k_i}+\tau^{C})/T_s\bigr\rceil$ slots, where $k_i\in\mathscr{K}$ is the domain to which $u_i$ belongs. For the combination selected by $u_i$, with inter-delivery interval $Y_i(\ell)$ under scenario $\ell$, the effective worst-case time-average AoII is
\begin{equation}
    \widetilde{\Delta}_i^{II}=\max_{\ell\in\mathscr{L}}
    \frac{\Psi_i\bigl(Y_i(\ell)+\epsilon_i\bigr)}{Y_i(\ell)+\epsilon_i},
    \label{eq:Delta_tilde}
\end{equation}
which reduces to $\max_{\ell}\Psi_i(Y_i(\ell))/Y_i(\ell)$ when the orchestration overhead is negligible, that is, $\epsilon_i=0$.

%=======================================================================
\section{Problem Formulation}
%=======================================================================

The objective of SENTINEL is to minimize the worst-case time-average AoII across all monitored UEs under stochastic satellite handover and shadowing uncertainty by jointly optimizing update scheduling, association, and bandwidth allocation. Let $\boldsymbol{\chi}_i=\bigl(\rho_i,\{\alpha_{ij}\}_{j},\{\omega_{in}\}_{n},\{\beta^{us}_{ij}\}_{j},\{\beta^{ub}_{in}\}_{n}\bigr)$ collect the decisions of $u_i$, where $\rho_i$ is its update generation rate, and let $Y_i(\boldsymbol{\chi}_i,\boldsymbol{\xi})$ denote its inter-delivery interval (in slots) under disruption realization $\boldsymbol{\xi}\in\Omega$, as defined in Section~III-C. By the renewal-reward argument of Section~III-D, the time-average AoII of $u_i$ is $\Psi_i(Y_i)/Y_i$. We formulate the robust \underline{V}alue-of-information-aware, uncert\underline{A}inty-robust, \underline{N}etwork-wide \underline{G}eneration, \underline{U}pdate-scheduling, \underline{A}ssociation, and \underline{R}esource orchestration under \underline{D}isruption (VANGUARD) problem as
\begin{align}
    \mathscr{P}_{1}:\;&\min_{\{\boldsymbol{\chi}_i\}_{i\in\mathscr{U}}}
    \;\max_{\boldsymbol{\xi}\in\Omega}
    \;\frac{1}{|\mathscr{U}|}\sum_{i\in\mathscr{U}}
    \frac{\Psi_{i}\bigl(Y_{i}(\boldsymbol{\chi}_i,\boldsymbol{\xi})\bigr)}
    {Y_{i}(\boldsymbol{\chi}_i,\boldsymbol{\xi})}
    \nonumber\\
    \text{s.t.}\;\;
    &\text{C1}:\sum_{j\in\mathscr{S}}\alpha_{ij}
    +\sum_{n\in\mathscr{B}}\omega_{in}=1,
    \;\forall\,i\in\mathscr{U},
    \nonumber\\
    &\text{C2}:\sum_{i\in\mathscr{U}}\alpha_{ij}\,\beta^{us}_{ij}
    \leq B^{s}_{j},
    \;\forall\,j\in\mathscr{S},
    \nonumber\\
    &\text{C3}:\sum_{i\in\mathscr{U}}\omega_{in}\,\beta^{ub}_{in}
    \leq B^{b}_{n},
    \;\forall\,n\in\mathscr{B},
    \nonumber\\
    &\text{C4}:\frac{\Psi_{i}\bigl(Y_{i}(\boldsymbol{\chi}_i,\boldsymbol{\xi})\bigr)}
    {Y_{i}(\boldsymbol{\chi}_i,\boldsymbol{\xi})}\leq\Delta_{i}^{II,\max},
    \;\forall\,i\in\mathscr{U},\;\forall\,\boldsymbol{\xi}\in\Omega,
    \nonumber\\
    &\text{C5}:\alpha_{ij}\bigl(V_{ij}-\Theta\bigr)\geq 0,
    \;\forall\,i\in\mathscr{U},\;\forall\,j\in\mathscr{S},
    \nonumber\\
    &\text{C6}:\rho_{i}^{\min}\leq\rho_{i}\leq\rho_{i}^{\max},
    \;\forall\,i\in\mathscr{U},
    \nonumber\\
    &\text{C7}:\alpha_{ij},\,\omega_{in}\in\{0,1\},
    \;\forall\,i,j,n.
    \nonumber
\end{align}
Here, $B^{s}_{j}$ and $B^{b}_{n}$ are the bandwidth capacities of SAT $s_j$ and BS $b_n$, $\Delta_{i}^{II,\max}$ is the AoII budget of $u_i$, $\rho_{i}^{\min}>0$ and $\rho_{i}^{\max}$ are its minimum required and maximum sustainable update rates, and $\Theta$ is the minimum elevation angle.

\textbf{C1} requires each UE to route its updates through exactly one serving node, SAT or BS, in each decision interval. \textbf{C2} and \textbf{C3} enforce the bandwidth capacity of each SAT and BS. \textbf{C4} is the robust correctness constraint, which requires the expected time-average AoII of every UE to stay within its budget under every disruption realization in $\Omega$, including the worst case. In a task-offloading formulation, this role is played by a completion deadline. The constraint is stated in expectation because the instantaneous AoII of a Markov source cannot be bounded deterministically: the source may remain in a mismatched state for an entire inter-delivery interval. \textbf{C5} allows $u_i$ to associate with $s_j$ only when $s_j$ is above the minimum elevation angle. \textbf{C6} bounds the update rate of each UE by its freshness requirement and its energy or channel-access budget. \textbf{C7} enforces binary association.

Problem $\mathscr{P}_{1}$ is a mixed-integer nonlinear minimax program. Even with a single disruption scenario and fixed update rates, it contains the generalized assignment problem, in which UEs with node-dependent costs and bandwidth demands are assigned to capacity-limited nodes, and is therefore NP-hard. The inner maximization adds a further difficulty. Because $\Psi_i(\cdot)$ is a nonlinear, source-specific function of the inter-delivery interval, the realization that maximizes the network AoII is not necessarily the one with the worst channel conditions; it also depends on how the volatility of each monitored source interacts with the outage that the realization induces. In addition, the feasible association set defined by C5 changes at every decision interval as satellites move.

%-----------------------------------------------------------------------
\subsection{Robust Binary Reformulation}
%-----------------------------------------------------------------------

To make $\mathscr{P}_{1}$ compatible with quantum-assisted binary search, we use the discretization of Section~III-E: bandwidth levels $\bar\beta_m^q$, update-rate levels $\bar\rho_i^h\in[\rho_i^{\min},\rho_i^{\max}]$, the scenario set $\mathscr{L}$, and the binary variable $x_{imqh}$ in~\eqref{eq:x_def}.

\emph{Outage and inter-delivery interval.} Each scenario $\ell\in\mathscr{L}$ specifies, for every UE-SAT link, a disruption window of $T^{\mathrm{dis}}_{ij}(\ell)$ slots during which the path loss equals $\tilde{\varphi}^{us}_{ij}(\ell)$ in~\eqref{eq:realized_pathloss}. Let $R_{imq}(\ell)$ denote the rate in~\eqref{eq:rate_us} when $u_i$ is served by node $m$ with bandwidth $\bar\beta_m^q$ under scenario $\ell$. The link is in outage throughout the window if this rate cannot deliver an update within one slot, so the induced outage duration is
\begin{equation}
    \tau_{imq}(\ell)=
    \begin{cases}
        T^{\mathrm{dis}}_{im}(\ell)\,\mathbb{I}\bigl\{R_{imq}(\ell)\,T_s<D_i\bigr\}, & m\in\mathscr{S},\\
        0, & m\in\mathscr{B},
    \end{cases}
    \label{eq:tau_out}
\end{equation}
where BS links are terrestrial and unaffected by satellite disruption. The bandwidth level, therefore, matters for AoII: a larger allocation can keep the rate above the delivery threshold during shadowing, whereas a complete handover interruption causes an outage at any bandwidth. We assume that a UE whose update fails during an outage transmits a fresh sample as soon as the link is restored. The worst case occurs when the outage begins at a scheduled sampling slot, which gives the inter-delivery interval
\begin{equation}
    Y_{imqh}(\ell) = \left\lceil\frac{1}{\bar\rho_i^{h}T_s}\right\rceil + \tau_{imq}(\ell),
    \label{eq:Y_model}
\end{equation}
and the corresponding time-average AoII cost coefficient
\begin{equation}
    \psi_{imqh}^{(\ell)} = \frac{\Psi_{i}\bigl(Y_{imqh}(\ell)\bigr)}{Y_{imqh}(\ell)},
    \label{eq:psi_imqh}
\end{equation}
which is precomputed offline for every $(i,m,q,h,\ell)$ from $P_i$ and the scenario set. When the orchestration overhead of Section~III-F is included, $Y_{imqh}(\ell)$ is replaced by $Y_{imqh}(\ell)+\epsilon_i$ in~\eqref{eq:psi_imqh}.

\emph{Robust pre-elimination.} The scenario is realized by the environment rather than chosen by the UE, so it does not index the decision variable. Instead, each variable $x_{imqh}$ carries $L$ cost coefficients, and every selection is evaluated under all scenarios. We define the admissible set $\mathscr{A}_i$ of $u_i$ as the combinations $(m,q,h)$ for which (i) $m\in\mathscr{B}$, or $m\in\mathscr{S}$ with $V_{im}\geq\Theta$; (ii) the nominal rate satisfies $R_{imq}T_s\geq D_i$; and (iii) $\max_{\ell\in\mathscr{L}}\psi_{imqh}^{(\ell)}\leq\Delta_i^{II,\max}$. Condition (i) enforces C5, and condition (iii) is the discrete counterpart of C4. Because each UE selects exactly one combination, $u_i$ satisfies its AoII budget under every scenario if and only if its selected combination belongs to $\mathscr{A}_i$. The robust budget is therefore enforced by removing variables rather than adding penalty terms, thereby reducing the number of qubits.

With these definitions, the quantum-compatible robust binary reformulation of VANGUARD is
\begin{align}
    \mathscr{P}_{2}:\;&\min_{\mathbf{x}}\;
    \max_{\ell\in\mathscr{L}}\;
    \frac{1}{|\mathscr{U}|}
    \sum_{i\in\mathscr{U}}
    \sum_{(m,q,h)\in\mathscr{A}_i}
    \psi_{imqh}^{(\ell)}\,x_{imqh}
    \nonumber\\
    \text{s.t.}\;\;
    &\text{D1}:\sum_{(m,q,h)\in\mathscr{A}_i}
    x_{imqh}=1,\;\forall\,i\in\mathscr{U},
    \nonumber\\
    &\text{D2}:\sum_{i\in\mathscr{U}}\sum_{(j,q,h)\in\mathscr{A}_i}
    \bar{\beta}^{q}_{j}\,x_{ijqh}
    \leq B^{s}_{j},\;\forall\,j\in\mathscr{S},
    \nonumber\\
    &\text{D3}:\sum_{i\in\mathscr{U}}\sum_{(n,q,h)\in\mathscr{A}_i}
    \bar{\beta}^{q}_{n}\,x_{inqh}
    \leq B^{b}_{n},\;\forall\,n\in\mathscr{B},
    \nonumber\\
    &\text{D4}:x_{imqh}\in\{0,1\},
    \;\forall\,i\in\mathscr{U},\;(m,q,h)\in\mathscr{A}_i.
    \nonumber
\end{align}
Constraint D1 ensures that each UE selects exactly one combination of serving node, bandwidth level, and update-rate level. Constraints D2 and D3 enforce the discrete bandwidth capacity of each SAT and BS, and D4 defines the binary domain.

To convert the minimax objective into a standard minimization, we introduce an auxiliary variable $\eta$ representing the worst-case time-average AoII and write the equivalent epigraph form
\begin{align}
    \mathscr{P}_{3}:\;&\min_{\mathbf{x},\,\eta}\;\eta
    \nonumber\\
    \text{s.t.}\;\;
    &\text{E1}:
    \frac{1}{|\mathscr{U}|}
    \sum_{i\in\mathscr{U}}
    \sum_{(m,q,h)\in\mathscr{A}_i}
    \psi_{imqh}^{(\ell)}\,x_{imqh}
    \leq\eta,
    \nonumber\\
    &\hspace{1.8cm}\forall\,\ell\in\mathscr{L},
    \nonumber\\
    &\text{D1--D4}.
    \nonumber
\end{align}
At the optimum, E1 is tight for at least one scenario, so $\eta$ equals the worst-case objective of $\mathscr{P}_{2}$. To obtain a fully binary problem, $\eta$ is restricted to a uniform grid of resolution $\Delta_\eta$ and encoded with $M_\eta$ bits as $\eta=\Delta_\eta\sum_{b=0}^{M_\eta-1}2^{b}v_{b}$, $v_b\in\{0,1\}$, where $M_\eta$ is chosen so that the grid covers $[0,\max_i\Delta_i^{II,\max}]$. The inequality constraints E1, D2, and D3 are converted into quadratic equality penalties with binary slack variables in Section~V.

The epigraph form also provides a certificate that does not depend on the solver. For any schedule $\mathbf{x}$ returned by the hybrid QAOA procedure, the $L$ left-hand sides of E1 can be evaluated exactly, so the worst-case time-average AoII of that schedule over all scenarios in $\mathscr{L}$ is known with certainty, whether or not the solver has reached the global optimum.

%=======================================================================
\section{Proposed Solution}
%=======================================================================

The robust VANGUARD problem $\mathscr{P}_{1}$ is NP-hard (Section~IV). Even after the binary reformulation, its feasible set contains up to $\prod_{i\in\mathscr{U}}|\mathscr{A}_i|$ schedules, and solving $\mathscr{P}_{3}$ centrally on a quantum processor would require $N_x$ decision qubits plus the epigraph and slack qubits, which exceeds the capacity of near-term devices. SENTINEL addresses this in three steps: (i) each domain $k$ transforms its local sub-problem $\mathscr{P}_{3}^{(k)}$ into a QUBO that jointly encodes association, bandwidth allocation, update scheduling, and worst-case epigraph minimization of AoII; (ii) the QUBO is mapped to an Ising Hamiltonian and solved by hybrid QAOA; and (iii) the local solutions are coordinated through an ADMM-based classical consensus step.

%-----------------------------------------------------------------------
\subsection{QUBO Formulation}
\label{sec:qubo}
%-----------------------------------------------------------------------

A QUBO problem has the standard form
\begin{equation}
    \min_{\mathbf{w}\in\{0,1\}^{N}}
    \mathbf{w}^{\top}\mathbf{Q}\mathbf{w},
    \label{eq:QUBO}
\end{equation}
where $\mathbf{Q}\in\mathbb{R}^{N\times N}$ is a symmetric coefficient matrix. For domain $k$, the local AoII cost under scenario $\ell$ is
\begin{equation}
    c_k^{(\ell)}(\mathbf{x}_k)=\frac{1}{|\mathscr{U}|}\sum_{i\in\mathscr{U}_k}\sum_{(m,q,h)\in\mathscr{A}_i}\psi_{imqh}^{(\ell)}\,x_{imqh},
    \label{eq:local_cost}
\end{equation}
so that the network cost in E1 is $\sum_{k\in\mathscr{K}}c_k^{(\ell)}(\mathbf{x}_k)$. The local sub-problem $\mathscr{P}_{3}^{(k)}$ minimizes a local epigraph variable $\eta_k$ subject to $c_k^{(\ell)}(\mathbf{x}_k)\leq\eta_k$ for all $\ell\in\mathscr{L}$ and to D1 to D3 restricted to domain $k$. Because $\max_{\ell}\sum_k c_k^{(\ell)}\leq\sum_k\max_{\ell}c_k^{(\ell)}\leq\sum_k\eta_k$, the sum of the local epigraph variables is an upper bound on the network-wide worst-case AoII, with equality when all domains attain their worst case under the same scenario.

To cast $\mathscr{P}_{3}^{(k)}$ into the form~\eqref{eq:QUBO}, each constraint is converted into a quadratic penalty. An equality constraint $h(\mathbf{x})=0$ is penalized as $\lambda\,h(\mathbf{x})^{2}$. An inequality constraint $g(\mathbf{x})\leq 0$ is first written as $g(\mathbf{x})+\nu=0$ with a non-negative slack $\nu=\Delta\sum_{b=0}^{W-1}2^{b}y_{b}$, $y_b\in\{0,1\}$, and then penalized as $\lambda\,(g(\mathbf{x})+\nu)^{2}$. For the penalty to vanish at every feasible point, the slack must be able to represent $-g(\mathbf{x})$ exactly. We therefore round the cost coefficients $\psi_{imqh}^{(\ell)}/|\mathscr{U}|$ up to integer multiples of the epigraph resolution $\Delta_\eta$, which keeps $\eta_k$ a valid upper bound, choose the bandwidth levels $\bar\beta_m^q$ and capacities as integer multiples of a resolution $\Delta_\beta$, set the slack step $\Delta$ equal to the resolution of the corresponding constraint, and use $W=\lceil\log_2(g^{\max}/\Delta+1)\rceil$ bits, where $g^{\max}$ is the largest value that $-g(\mathbf{x})$ can take. The local epigraph variable is encoded as $\eta_k=\Delta_\eta\sum_{b=0}^{M_\eta-1}2^{b}v_{b}$. Collecting all binary variables of domain $k$ into $\mathbf{w}_k=(\mathbf{x}_k,\mathbf{v}_k,\mathbf{y}_k)$, the unconstrained QUBO objective is
\begin{equation}
    F(\mathbf{w}_k)
    = F_{\text{epi}}
    + F_{\text{assoc}}
    + F_{\text{cap}}
    + F_{\text{scen}}
    + \Gamma_{k}(\mathbf{x}_{k};\mathbf{z}_k,\mathbf{u}_k),
    \label{eq:F_total}
\end{equation}
where each term is described below.

\textbf{Epigraph objective term.} The local epigraph variable enters as a linear term,
\begin{equation}
    F_{\text{epi}}
    =\Delta_{\eta}\sum_{b=0}^{M_{\eta}-1}
    2^{b}\,v_{b}.
    \label{eq:Fepi}
\end{equation}

\textbf{Single-association penalty.} Constraint D1 requires each UE to select exactly one admissible combination $(m,q,h)$:
\begin{equation}
    F_{\text{assoc}}
    =\lambda_{1}\sum_{i\in\mathscr{U}_{k}}
    \!\left(
    \sum_{(m,q,h)\in\mathscr{A}_i^{k}}
    x_{imqh} - 1
    \right)^{\!2},
    \label{eq:Fassoc}
\end{equation}
where $\mathscr{A}_i^{k}$ contains the combinations in $\mathscr{A}_i$ whose node belongs to $\mathscr{S}_k\cup\mathscr{B}_k$ and $\lambda_{1}>0$. Expanding the square yields linear terms, since $x_{imqh}^{2}=x_{imqh}$, and pairwise terms $x_{imqh}x_{im'q'h'}$ between the variables of the same UE.

\textbf{Capacity penalty.} Constraints D2 and D3 limit the bandwidth of each SAT and BS:
\begin{align}
    F_{\text{cap}}
    &=\lambda_{2}\!\sum_{j\in\mathscr{S}_{k}^{\mathrm{o}}}
    \!\bigl(b_j(\mathbf{x}_k)+\nu_{j}-B^{s}_{j}\bigr)^{2}
    \nonumber\\
    &\quad+\lambda_{3}\!\sum_{n\in\mathscr{B}_{k}}
    \!\bigl(b_n(\mathbf{x}_k)+\nu_{n}-B^{b}_{n}\bigr)^{2},
    \label{eq:Fcap}
\end{align}
where $b_m(\mathbf{x}_k)=\sum_{i\in\mathscr{U}_k}\sum_{(m,q,h)\in\mathscr{A}_i^{k}}\bar\beta_m^q\,x_{imqh}$ is the bandwidth used at node $m$ by domain $k$, and $\nu_j$ and $\nu_n$ are binary-encoded slacks. The first sum runs over the set $\mathscr{S}_{k}^{\mathrm{o}}$ of SATs visible only to the domain $k$. The capacity of the remaining SATs, collected in $\mathscr{S}_{k}^{\mathrm{sh}}=\mathscr{S}_k\setminus\mathscr{S}_{k}^{\mathrm{o}}$, is shared with other domains and is handled by the coordination term below.

\textbf{Epigraph scenario penalty.} Constraint E1 requires the local AoII cost under every scenario to be at most $\eta_k$:
\begin{equation}
    F_{\text{scen}}
    =\lambda_{4}\sum_{\ell\in\mathscr{L}}
    \!\left(
    c_k^{(\ell)}(\mathbf{x}_k)
    + \nu_{\ell} - \eta_k
    \right)^{\!2},
    \label{eq:Fscen}
\end{equation}
where $\nu_{\ell}$ is the binary-encoded slack for scenario $\ell$. This term is where robustness enters the QUBO: each variable $x_{imqh}$ appears in all $L$ squared terms with its scenario-specific coefficient, so a schedule that performs well under the nominal scenario but poorly under a disruptive one is penalized unless $\eta_k$ rises to cover it. The per-UE AoII budget and the visibility and update-rate limits need no penalty terms, because the corresponding combinations have already been removed from $\mathscr{A}_i$ (Section~IV-A).

\textbf{Coordination term.} For each shared SAT $j\in\mathscr{S}_{k}^{\mathrm{sh}}$, the coordinator assigns domain $k$ a bandwidth target $z_{jk}$ and maintains a scaled dual variable $u_{jk}$ (Section~V-D). The local QUBO includes the augmented Lagrangian term
\begin{equation}
    \Gamma_{k}(\mathbf{x}_{k};\mathbf{z}_k,\mathbf{u}_k)
    =\frac{\varrho}{2}\sum_{j\in\mathscr{S}_{k}^{\mathrm{sh}}}
    \bigl(b_j(\mathbf{x}_k)-z_{jk}+u_{jk}\bigr)^{2},
    \label{eq:penalty}
\end{equation}
where $\varrho>0$ is the ADMM penalty parameter. Because $b_j(\mathbf{x}_k)$ is linear in $\mathbf{x}_k$, this term contributes linear and pairwise terms to the QUBO and steers the bandwidth used by domain $k$ at each shared SAT toward its target.

The matrix $\mathbf{Q}^{(k)}$ is assembled by collecting the coefficients of all linear and pairwise terms of $F(\mathbf{w}_k)$, with linear coefficients on the diagonal because $w^2=w$ for binary $w$, and pairwise coefficients split equally between the two symmetric off-diagonal entries. On the feasible set, all penalty terms vanish for a suitable choice of slack bits, so $F(\mathbf{w}_k)=\eta_k+\Gamma_k$. The penalty weights must be large enough that no infeasible assignment attains a lower value.

\begin{proposition}
\label{prop:penalty}
Let $\delta_1=1$, $\delta_2=\delta_3=\Delta_\beta$, and $\delta_4=\Delta_\eta$ denote the smallest nonzero residual of the corresponding penalized constraint, let $\eta^{\max}=\Delta_\eta(2^{M_\eta}-1)$, and let $\Gamma_k^{\max}$ and $\Gamma_k^{\min}$ be the largest and smallest values of $\Gamma_k$ over $\{0,1\}^{|\mathbf{x}_k|}$. If
\begin{equation}
    \lambda_{l} > \frac{\eta^{\max}+\Gamma_k^{\max}-\Gamma_k^{\min}}{\delta_l^{2}},
    \quad l=1,\ldots,4,
    \label{eq:lambda_sufficient}
\end{equation}
and $\mathscr{P}_{3}^{(k)}$ is feasible, then every global minimizer of $F(\mathbf{w}_k)$ is feasible for $\mathscr{P}_{3}^{(k)}$ and optimal for the local problem with coordination term $\Gamma_k$.
\end{proposition}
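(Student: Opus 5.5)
The plan is a standard exchange argument: compare the value of $F$ at any infeasible assignment with its value at a feasible optimum of $\mathscr{P}_{3}^{(k)}$ (with the coordination term included), and show that the penalty weights in \eqref{eq:lambda_sufficient} make the infeasible assignment strictly worse.

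\textbf{Step 1 (feasible reference value).} Let $(\mathbf{x}_k^\star,\eta_k^\star)$ minimize $\eta_k+\Gamma_k(\mathbf{x}_k)$ over the feasible set of $\mathscr{P}_{3}^{(k)}$. Such a minimizer exists because $\mathscr{P}_{3}^{(k)}$ is feasible and the set is finite. By the rounding convention of Section~V-A, the cost coefficients, bandwidth levels and capacities are integer multiples of $\Delta_\eta$ and $\Delta_\beta$. The slack ranges $W=\lceil\log_2(g^{\max}/\Delta+1)\rceil$ then let each slack represent $-g$ exactly. So there are slack bits $\mathbf{y}_k^\star$ for which every squared penalty vanishes, and
\begin{equation}
F(\mathbf{w}_k^\star)=\eta_k^\star+\Gamma_k(\mathbf{x}_k^\star)\le \eta^{\max}+\Gamma_k^{\max}.
\end{equation}
I would also note that, for a feasible $\mathbf{x}_k$, choosing wrong slack bits only increases $F$. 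Hence on feasible assignments the minimum of $F$ over the slacks equals $\eta_k+\Gamma_k$.

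\textbf{Step 2 (infeasible assignments are penalized).} Let $\mathbf{w}_k$ be any assignment whose $(\mathbf{x}_k,\eta_k)$ violates some constraint $l$, or for which some penalty residual is nonzero. The key claim is that every nonzero residual has absolute value at least $\delta_l$, because all quantities lie on the lattice of their resolution.
\begin{itemize}
\item For D1 the residual $\sum x-1$ is a nonzero integer, so it is at least $1$ in absolute value.
\item For capacity, $b_j+\nu_j-B_j$ is a multiple of $\Delta_\beta$.
\item For E1, $c_k^{(\ell)}+\nu_\ell-\eta_k$ is a multiple of $\Delta_\eta$ after rounding.
\end{itemize}
If the underlying constraint is violated, no slack choice can zero the residual, since the slack is nonnegative. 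That penalty is therefore at least $\lambda_l\delta_l^2$. Using $F_{\text{epi}}\ge 0$ and $\Gamma_k\ge\Gamma_k^{\min}$ gives
\begin{equation}
F(\mathbf{w}_k)\ge \lambda_l\delta_l^2+\Gamma_k^{\min}>\eta^{\max}+\Gamma_k^{\max}\ge F(\mathbf{w}_k^\star).
\end{equation}
So no global minimizer violates a constraint. A minimizer also cannot have feasible $(\mathbf{x}_k,\eta_k)$ with a wrong slack, because the same bound shows that it exceeds the zero-penalty completion of $\mathbf{x}_k$.

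\textbf{Step 3 (optimality).} Every global minimizer therefore has all penalties zero, so $F=\eta_k+\Gamma_k$ on it. By comparison with $\mathbf{w}_k^\star$, it attains the minimum of $\eta_k+\Gamma_k$ over the feasible set, which is the stated optimality.

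\textbf{Main obstacle.} The delicate step is the lattice claim in Step 2, particularly for the scenario penalty. It requires $c_k^{(\ell)}$, $\eta_k$ and $\nu_\ell$ to share the grid $\Delta_\eta$. It also requires the slack range to cover $\eta_k-c_k^{(\ell)}$ up to $\eta^{\max}$, so that a feasible pair is never mistakenly penalized. I would check these range and resolution conditions explicitly from the rounding convention. If $\Gamma_k$ is not lattice-valued this does not matter, because it is only bounded crudely by $\Gamma_k^{\min}$ and $\Gamma_k^{\max}$.
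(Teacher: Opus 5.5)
Your proposal is correct and follows essentially the same argument as the paper. Both compare a zero-penalty feasible completion, bounded by $\eta^{\max}+\Gamma_k^{\max}$, against the lattice lower bound $\lambda_l\delta_l^2+\Gamma_k^{\min}$ on any assignment with a violated constraint. Your explicit treatment of feasible $\mathbf{x}_k$ with mis-set slack bits, and your check that the E1 slack range covers $\eta^{\max}$, spell out points the paper leaves implicit in its final step.
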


\begin{proof}
By the choice of grids, the left-hand side of every penalized constraint is an integer multiple of its resolution $\delta_l$. Let $\mathbf{w}^{\star}$ be a feasible assignment whose slack bits represent the constraint gaps exactly, so all penalties vanish and $F(\mathbf{w}^{\star})=\eta_k^{\star}+\Gamma_k(\mathbf{x}^{\star})\leq\eta^{\max}+\Gamma_k^{\max}$. Let $\mathbf{w}^{\dagger}$ be an assignment whose $\mathbf{x}$ or $\mathbf{v}$ part violates some constraint $l$. Since the slack is non-negative, it cannot offset a positive violation, so the residual of constraint $l$ is at least $\delta_l$ in magnitude for every choice of slack bits, and the corresponding penalty is at least $\lambda_l\delta_l^{2}$. Because $F_{\text{epi}}\geq 0$ and all other penalties are non-negative,
\begin{align}
    F(\mathbf{w}^{\dagger})
    &\geq \lambda_l\delta_l^{2}+\Gamma_k^{\min}
    > \eta^{\max}+\Gamma_k^{\max}
    \geq F(\mathbf{w}^{\star}).
    \nonumber
\end{align}
Hence, no infeasible assignment is a global minimizer. On the feasible set, $F=\eta_k+\Gamma_k$, once the slack bits are chosen optimally, so a global minimizer of $F$ also minimizes $\eta_k+\Gamma_k$ over the feasible set.
\end{proof}

The bound~\eqref{eq:lambda_sufficient} is sufficient rather than necessary. Because very large penalty weights compress the relative energy differences among feasible schedules and make the QAOA landscape harder to optimize, we set each $\lambda_l$ slightly above this bound in practice.

%-----------------------------------------------------------------------
\subsection{Ising Hamiltonian Mapping}
%-----------------------------------------------------------------------

Quantum processors operate on qubits described by Pauli operators. The local QUBO over $\mathbf{w}_k$ is mapped to an Ising Hamiltonian through the substitution
\begin{equation}
    w_{a} = \frac{1-\sigma^{z}_{a}}{2},
    \label{eq:ising_sub}
\end{equation}
where $a$ indexes all $\hat{N}_k$ qubits of domain $k$, namely the decision qubits $x_{imqh}$, the epigraph qubits $v_b$, and the slack qubits, and $\sigma^{z}_{a}$ is the Pauli-$Z$ operator on qubit $a$. The number of qubits is
\begin{equation}
    \hat{N}_k=\sum_{i\in\mathscr{U}_k}|\mathscr{A}_i^{k}|+M_\eta+\sum_{j\in\mathscr{S}_k^{\mathrm{o}}}W_j+\sum_{n\in\mathscr{B}_k}W_n+\sum_{\ell\in\mathscr{L}}W_\ell,
    \label{eq:Nk}
\end{equation}
where $W_j$, $W_n$, and $W_\ell$ are the slack bit counts. Substituting~\eqref{eq:ising_sub} into $\mathbf{w}_k^{\top}\mathbf{Q}^{(k)}\mathbf{w}_k$ and discarding additive constants gives the cost Hamiltonian
\begin{equation}
    H_{C}^{(k)}
    = \sum_{a} h_{a}\,\sigma^{z}_{a}
    +\sum_{a < a'}
      J_{aa'}\,
      \sigma^{z}_{a}\sigma^{z}_{a'},
    \label{eq:Hc}
\end{equation}
with local fields and couplings
\begin{equation}
    h_{a}
    = -\frac{1}{2}\sum_{a'}Q^{(k)}_{aa'},
    \qquad
    J_{aa'}
    = \frac{1}{2}Q^{(k)}_{aa'},
    \label{eq:h_J}
\end{equation}
which follow from the symmetry of $\mathbf{Q}^{(k)}$. By Proposition~\ref{prop:penalty}, the ground state of $H_{C}^{(k)}$ encodes an optimal feasible schedule for domain $k$. We use the standard transverse-field mixer
\begin{equation}
    H_{M}^{(k)}
    = \sum_{a}\sigma^{x}_{a},
    \label{eq:HM}
\end{equation}
where $\sigma^{x}_{a}$ is the Pauli-$X$ operator on qubit $a$. Constraint-preserving XY mixers can replace $H_{M}^{(k)}$ on the one-hot groups of D1, at the cost of deeper circuits. The depth-$p$ QAOA state is
\begin{equation}
    |\boldsymbol{\gamma},\boldsymbol{\theta}\rangle
    = \prod_{l=1}^{p}
    e^{-\mathrm{i}\theta_{l}H_{M}^{(k)}}
    e^{-\mathrm{i}\gamma_{l}H_{C}^{(k)}}
    |{+}\rangle^{\otimes \hat{N}_{k}},
    \label{eq:qaoa_state}
\end{equation}
where $|{+}\rangle^{\otimes \hat{N}_{k}}$ is the uniform superposition over all $\hat{N}_{k}$ qubits and the variational parameters $(\boldsymbol{\gamma},\boldsymbol{\theta})$ are optimized classically to minimize
\begin{equation}
    \mathcal{E}(\boldsymbol{\gamma},\boldsymbol{\theta})
    =\langle\boldsymbol{\gamma},\boldsymbol{\theta}|
    H_{C}^{(k)}
    |\boldsymbol{\gamma},\boldsymbol{\theta}\rangle.
    \label{eq:qaoa_energy}
\end{equation}

%-----------------------------------------------------------------------
\subsection{Solution Quality, Certification, and Resource Scaling}
%-----------------------------------------------------------------------

As $p\rightarrow\infty$, the optimal QAOA energy converges to the ground-state energy of $H_{C}^{(k)}$~\cite{Farhi2014}, because QAOA can then approximate adiabatic evolution from $H_{M}^{(k)}$ to $H_{C}^{(k)}$~\cite{Albash2018}. The depth required for a given accuracy, however, depends on the minimum spectral gap along this path, which is generally unknown and may be small. No general approximation guarantee is available for constrained QUBOs of this type at finite depth, so we assess QAOA solution quality empirically in Section~VI against an exact minimax reference.

The worst-case guarantee of SENTINEL, therefore, does not rest on QAOA optimality. After the variational parameters are optimized, the QAOA state of each domain is sampled, each measured bitstring is decoded into a local schedule, and samples that violate the local constraints are discarded. The coordinator then assembles the network schedule, evaluates its AoII $\sum_{k}c_k^{(\ell)}(\mathbf{x}_k)$ exactly for all $\ell\in\mathscr{L}$, and retains the feasible schedule with the smallest maximum as the incumbent (Section~V-D). The reported value is therefore the exact worst-case time-average AoII of the returned schedule over $\mathscr{L}$, whether or not the solver has reached the global optimum.

The distributed decomposition reduces both the width and the gate count of the QAOA circuits, as the following result shows.

\begin{proposition}
\label{prop:gates}
Let $\hat{N}=\sum_{k}\hat{N}_k$. The cost layer $e^{-\mathrm{i}\gamma H_C^{(k)}}$ of domain $k$ requires at most $\hat{N}_k(\hat{N}_k-1)/2$ two-qubit $ZZ$ rotations. Under a balanced partition with $\hat{N}_k=\hat{N}/K$, the per-domain count is at most $\hat{N}^2/(2K^2)$ and the total over all domains is at most $\hat{N}^2/(2K)$, compared with up to $\hat{N}(\hat{N}-1)/2$ for a centralized Hamiltonian over the same variables.
\end{proposition}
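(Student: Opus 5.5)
The plan is to read the gate count directly off the structure of the cost Hamiltonian~\eqref{eq:Hc}, then do simple arithmetic for the balanced partition.

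\textbf{Per-domain count.} By~\eqref{eq:Hc}, $H_C^{(k)}$ is a sum of single-qubit $Z$ terms and two-qubit $ZZ$ terms $J_{aa'}\sigma^z_a\sigma^z_{a'}$ over pairs $a<a'$. All these terms are diagonal in the computational basis, so they commute, and $e^{-\mathrm{i}\gamma H_C^{(k)}}$ factorizes exactly (with no Trotter error) into a product of single-qubit $R_Z$ rotations and one $ZZ$ rotation $e^{-\mathrm{i}\gamma J_{aa'}\sigma^z_a\sigma^z_{a'}}$ for each pair with $J_{aa'}\neq 0$. The number of two-qubit gates is therefore the number of nonzero off-diagonal couplings. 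This is at most the number of unordered pairs among the $\hat N_k$ qubits, namely $\binom{\hat N_k}{2}=\hat N_k(\hat N_k-1)/2$.

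\textbf{Balanced partition.} Substituting $\hat N_k=\hat N/K$ gives a per-domain count of $(\hat N/K)(\hat N/K-1)/2\le \hat N^2/(2K^2)$. Summing over the $K$ domains yields at most $K\cdot\hat N^2/(2K^2)=\hat N^2/(2K)$.

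\textbf{Centralized comparison.} A single Hamiltonian over all $\hat N$ qubits has at most $\binom{\hat N}{2}=\hat N(\hat N-1)/2$ couplings by the same counting argument. The statement says ``up to'' this number, so only the upper bound is needed; we do not need to show that every coupling is actually nonzero.

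The only point needing care is the identification of variables. Taking $\hat N=\sum_k\hat N_k$ as the centralized variable count treats the epigraph and slack qubits as if they were replicated per domain. The proposition explicitly compares against a centralized Hamiltonian ``over the same variables,'' so this identification is legitimate. None of the steps poses a real obstacle: the argument is the commuting-diagonal factorization followed by pair counting.
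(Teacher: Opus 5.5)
Your proof is correct and follows the paper's argument: one $ZZ$ rotation per nonzero coupling $J_{aa'}$, at most $\hat{N}_k(\hat{N}_k-1)/2$ qubit pairs per domain, then substituting $\hat{N}_k=\hat{N}/K$ and summing over the $K$ domains. The paper also remarks that the bound is attained up to a constant factor, since $F_{\text{scen}}$ couples all local decision and epigraph bits, but as you note the statement only asserts upper bounds, so that remark is not needed.
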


\begin{proof}
Each nonzero $J_{aa'}$ in~\eqref{eq:Hc} requires one $ZZ$ rotation, and there are at most $\hat{N}_k(\hat{N}_k-1)/2$ qubit pairs in domain $k$. The bound is attained up to a constant factor because $F_{\text{scen}}$ squares a sum over all local decision variables and the epigraph bits, which couple every pair of them. Substituting $\hat{N}_k=\hat{N}/K$ and summing over the $K$ domains gives the stated counts.
\end{proof}

Decomposition thus reduces the qubit count per domain by a factor of $K$ and the number of two-qubit gates per QAOA layer by up to $K$ overall and $K^2$ per domain. On hardware with limited connectivity, where dense coupling must be implemented through SWAP networks, this reduction translates directly into shallower compiled circuits.

%-----------------------------------------------------------------------
\subsection{SENTINEL Distributed Robust QAOA Algorithm}
%-----------------------------------------------------------------------

The domains are coupled through the bandwidth of the SATs they share, since D2 limits the total bandwidth that UEs from all domains may use at each SAT. SENTINEL handles this coupling with the sharing form of ADMM~\cite{Boyd2011}. For each shared SAT $s_j$, let $\mathscr{K}_j\subseteq\mathscr{K}$ denote the $K_j=|\mathscr{K}_j|$ domains to which it is visible. The coordinator assigns each domain $k\in\mathscr{K}_j$ a bandwidth target $z_{jk}$ and maintains a scaled dual variable $u_{jk}$, and the local QUBOs include the coordination term~\eqref{eq:penalty}. SENTINEL iterates between the $K$ local hybrid QAOA solvers and the coordinator. At each coordination round $r$, the following steps are executed.

\textbf{Local QUBO construction and Ising mapping.} Each domain $\mathcal{D}_{k}$ assembles $\mathbf{Q}^{(k)}$ from~\eqref{eq:F_total} over its admissible variables, using the current targets $\mathbf{z}_{k}^{(r)}$ and duals $\mathbf{u}_{k}^{(r)}$, and converts it to $H_{C}^{(k)}$ via~\eqref{eq:h_J}. The admissible sets $\mathscr{A}_i$ and the coefficients $\psi_{imqh}^{(\ell)}$ do not change between rounds and are computed once before the first round.

\textbf{Hybrid QAOA.} The depth-$p$ QAOA circuit is initialized in $|{+}\rangle^{\otimes\hat{N}_{k}}$, and a classical optimizer (COBYLA) updates $(\boldsymbol{\gamma},\boldsymbol{\theta})$ to minimize $\mathcal{E}(\boldsymbol{\gamma},\boldsymbol{\theta})$. After convergence, bitstrings are sampled, and the sample with the lowest value of $F$ among those satisfying the local constraints is selected as $\hat{\mathbf{x}}_{k}^{(r+1)}$.

\textbf{Consensus update.} For each shared SAT $s_j$, the coordinator computes the bandwidth $b_j(\hat{\mathbf{x}}_{k}^{(r+1)})$ used by each domain $k\in\mathscr{K}_j$ and sets $\tilde{a}_{jk}=b_j(\hat{\mathbf{x}}_{k}^{(r+1)})+u_{jk}^{(r)}$. The targets are updated by projecting $\{\tilde{a}_{jk}\}_{k\in\mathscr{K}_j}$ onto the capacity constraint $\sum_{k}z_{jk}\leq B^{s}_{j}$, and the duals by the scaled ADMM rule:
\begin{align}
    z_{jk}^{(r+1)}
    &= \tilde{a}_{jk}
    -\frac{1}{K_j}\max\Bigl\{0,\;\sum_{k'\in\mathscr{K}_j}\tilde{a}_{jk'}-B^{s}_{j}\Bigr\},
    \label{eq:z_update}\\
    u_{jk}^{(r+1)}
    &= u_{jk}^{(r)}
    +b_j\bigl(\hat{\mathbf{x}}_{k}^{(r+1)}\bigr)
    -z_{jk}^{(r+1)}.
    \label{eq:u_update}
\end{align}
If the domains exceed the capacity of a SAT, \eqref{eq:z_update} lowers every target at that SAT by an equal share of the excess, and~\eqref{eq:u_update} accumulates the remaining violation, so the next local QUBOs penalize bandwidth use at that SAT more strongly.

\textbf{Incumbent update.} The coordinator assembles the network schedule $\hat{\mathbf{x}}^{(r+1)}=(\hat{\mathbf{x}}_{1}^{(r+1)},\ldots,\hat{\mathbf{x}}_{K}^{(r+1)})$. If it satisfies D1 to D3, its exact network worst-case AoII
\begin{equation}
    \eta\bigl(\hat{\mathbf{x}}^{(r+1)}\bigr)=\max_{\ell\in\mathscr{L}}\sum_{k\in\mathscr{K}}c_k^{(\ell)}\bigl(\hat{\mathbf{x}}_k^{(r+1)}\bigr)
    \label{eq:eta_net}
\end{equation}
is computed, and the schedule replaces the incumbent $\mathbf{x}^{\star}$ if $\eta(\hat{\mathbf{x}}^{(r+1)})<\eta^{\star}$.

\textbf{Termination.} The algorithm stops when the primal residual $\bigl(\sum_{j,k}(b_j(\hat{\mathbf{x}}_{k}^{(r+1)})-z_{jk}^{(r+1)})^2\bigr)^{1/2}$ and the dual residual $\varrho\bigl(\sum_{j,k}(z_{jk}^{(r+1)}-z_{jk}^{(r)})^2\bigr)^{1/2}$ both fall below $\epsilon_{\mathrm{tol}}$, or when $R_{\max}$ rounds are reached. If no feasible network schedule has been found by then, a greedy repair moves UEs from overloaded SATs to the admissible combination with the smallest increase in worst-case cost at a node with spare capacity, and the repaired schedule is evaluated by~\eqref{eq:eta_net}. The incumbent is decoded into $\rho_i$, $\alpha_{ij}$, $\omega_{in}$, $\beta^{us}_{ij}$, and $\beta^{ub}_{in}$, together with its certified worst-case AoII $\eta^{\star}$ and the worst-case scenario $\ell^{\star}=\arg\max_{\ell}\sum_{k}c_k^{(\ell)}(\mathbf{x}_k^{\star})$. The complete procedure is summarized in Algorithm~\ref{alg:SENTINEL}.

\begin{algorithm}[!ht]
\caption{SENTINEL: Distributed Robust QAOA for VANGUARD}
\label{alg:SENTINEL}
\SetKwInOut{Input}{Input}
\SetKwInOut{Output}{Output}
\Input{$\mathscr{U}$, $\mathscr{S}$, $\mathscr{B}$,
$\{P_i\}$, $\{B^{s}_{j}\}$, $\{B^{b}_{n}\}$,
$\{\Delta_{i}^{II,\max}\}$, $\{\bar\rho_i^h\}$, $\{\bar\beta_m^q\}$,
$\Theta$, $\mathscr{L}$, $p$, $\varrho$, $\epsilon_{\mathrm{tol}}$, $R_{\max}$,
$\Delta_{\eta}$, $M_{\eta}$.}
\Output{$\mathbf{x}^{\star}$ decoded into $\rho_i$, $\alpha_{ij}$,
$\omega_{in}$, $\beta^{us}_{ij}$, $\beta^{ub}_{in}$; $\eta^{\star}$;
$\ell^{\star}$.}
\nl Partition the network into $K$ domains
    $\{\mathcal{D}_{k}\}_{k=1}^{K}$ and identify the shared SATs.\\
\nl Compute $\Psi_i(\cdot)$ via \eqref{eq:psi_def}, the coefficients
    $\psi_{imqh}^{(\ell)}$ via \eqref{eq:psi_imqh}, and the
    admissible sets $\mathscr{A}_i$.\\
\nl Initialize $z_{jk}^{(0)}\!\leftarrow\!B^{s}_{j}/K_j$,
    $u_{jk}^{(0)}\!\leftarrow\!0$,
    $\eta^{\star}\!\leftarrow\!+\infty$,
    $r\!\leftarrow\!0$.\\
\nl \Repeat{residuals $\leq\epsilon_{\mathrm{tol}}$ or $r\geq R_{\max}$}{
\nl \For{each domain $k=1,\ldots,K$ \emph{in parallel}}{
\nl Assemble $\mathbf{Q}^{(k)}$ from \eqref{eq:F_total}
    with $\mathbf{z}_{k}^{(r)}$, $\mathbf{u}_{k}^{(r)}$.\\
\nl Map $\mathbf{Q}^{(k)}\!\rightarrow\!H_{C}^{(k)}$
    via \eqref{eq:h_J}.\\
\nl Initialize QAOA:
    $|\psi\rangle\!\leftarrow\!
    |{+}\rangle^{\otimes\hat{N}_{k}}$.\\
\nl \Repeat{$(\boldsymbol{\gamma},\boldsymbol{\theta})$ converged}{
\nl Execute the depth-$p$ QAOA circuit;
    estimate $\mathcal{E}(\boldsymbol{\gamma},\boldsymbol{\theta})$.\\
\nl Update $(\boldsymbol{\gamma},\boldsymbol{\theta})$
    via COBYLA.\\
}
\nl Sample bitstrings; select the best locally feasible
    sample $\hat{\mathbf{x}}_{k}^{(r+1)}$.\\
}
\nl Update $z_{jk}^{(r+1)}$ via \eqref{eq:z_update} and
    $u_{jk}^{(r+1)}$ via \eqref{eq:u_update}.\\
\nl \If{$\hat{\mathbf{x}}^{(r+1)}$ satisfies D1--D3 and
    $\eta(\hat{\mathbf{x}}^{(r+1)})<\eta^{\star}$}{
    $\mathbf{x}^{\star}\!\leftarrow\!\hat{\mathbf{x}}^{(r+1)}$,
    $\eta^{\star}\!\leftarrow\!\eta(\hat{\mathbf{x}}^{(r+1)})$.\\
}
\nl $r\leftarrow r+1$.\\
}
\nl \If{$\eta^{\star}=+\infty$}{apply greedy repair to
    $\hat{\mathbf{x}}^{(r)}$, set it as $\mathbf{x}^{\star}$, and
    $\eta^{\star}\!\leftarrow\!\eta(\mathbf{x}^{\star})$ via \eqref{eq:eta_net}.\\}
\nl Decode $\mathbf{x}^{\star}$; set
    $\ell^{\star}=\arg\max_{\ell}\sum_{k}c_k^{(\ell)}(\mathbf{x}_k^{\star})$.\\
\end{algorithm}

\textbf{Convergence.} For convex local problems, the sharing form of ADMM converges, with both residuals tending to zero~\cite{Boyd2011}. Here the local problems are binary and therefore non-convex, so ADMM is used as a heuristic and convergence is not guaranteed in general~\cite{Boyd2011}. SENTINEL does not rely on convergence for either feasibility or certification. The incumbent is replaced only by network schedules that satisfy D1-D3, and its worst-case AoII~\eqref{eq:eta_net} is evaluated exactly. The number of rounds, therefore, affects the quality of the returned schedule but not the validity of its reported worst-case AoII. The penalty parameter $\varrho$ is adapted by residual balancing~\cite{Boyd2011}.

%-----------------------------------------------------------------------
\subsection{Complexity Analysis}
%-----------------------------------------------------------------------

We analyze the computational cost of SENTINEL across offline precomputation, QUBO assembly, Ising mapping, hybrid QAOA execution, and coordination.

\textbf{Offline precomputation.} Evaluating~\eqref{eq:aoii_slot} for all $\hat{x}\in\mathcal{X}_i$ and $s\leq Y^{\max}$ requires the diagonal entries of $P_i^{s}$ and the vectors $\tilde{P}_{i,\hat{x}}^{a}\mathbf{f}_{i,\hat{x}}$, which cost $\mathcal{O}(Y^{\max}|\mathcal{X}_i|^3)$, plus the double sum, which costs $\mathcal{O}(|\mathcal{X}_i|(Y^{\max})^2)$. The coefficients $\psi_{imqh}^{(\ell)}$ then require $\mathcal{O}(L\sum_i|\mathscr{A}_i|)$ table lookups. This step is performed once for a given set of sources and scenarios and is not repeated in each round.

\textbf{QUBO assembly and Ising mapping.} Assembling $\mathbf{Q}^{(k)}$ requires evaluating all pairwise terms of $F(\mathbf{w}_k)$, and the Ising mapping~\eqref{eq:h_J} reads every entry of $\mathbf{Q}^{(k)}$. Both steps are classical and cost $\mathcal{O}(\hat{N}_{k}^{2})$, with $\hat{N}_k$ given by~\eqref{eq:Nk}.

\textbf{Hybrid QAOA execution.} Each QAOA layer contains at most $\hat{N}_k(\hat{N}_k-1)/2$ two-qubit $ZZ$ rotations and $\hat{N}_k$ single-qubit rotations (Proposition~\ref{prop:gates}). If the classical optimizer runs for $I_{\text{QAOA}}$ iterations and each energy estimate uses $N_{\text{shot}}$ circuit executions, the cost per domain per round is
\begin{equation}
    C_{\text{QAOA}}^{k}
    =\mathcal{O}\!\left(
    I_{\text{QAOA}}\,p\,N_{\text{shot}}\,\hat{N}_{k}^{2}
    \right).
    \label{eq:C_qaoa}
\end{equation}

\textbf{Coordination.} The updates~\eqref{eq:z_update} and~\eqref{eq:u_update} cost $\mathcal{O}(\sum_{j}K_j)$ per round, and the exact worst-case evaluation~\eqref{eq:eta_net} costs $\mathcal{O}(L|\mathscr{U}|)$, since each UE contributes one selected coefficient per scenario.

\textbf{Overall complexity.} The local solvers run in parallel, so the wall-clock time of each round is determined by the largest domain. Under balanced partitioning with $\hat{N}_{k}\approx \hat{N}/K$, the wall-clock cost over $R_{\max}$ rounds is
\begin{equation}
    C_{\text{SENTINEL}}
    =\mathcal{O}\!\left(
    R_{\max}\Bigl(I_{\text{QAOA}}\,p\,N_{\text{shot}}\,
    \frac{\hat{N}^{2}}{K^{2}}+L|\mathscr{U}|\Bigr)
    \right),
    \label{eq:complexity}
\end{equation}
and the total quantum work summed over all domains is $\mathcal{O}(R_{\max}I_{\text{QAOA}}\,p\,N_{\text{shot}}\,\hat{N}^{2}/K)$. The number of qubits $\hat{N}$ grows linearly with the number of UEs, admissible combinations, and scenarios, since each scenario adds only its slack bits.

\textbf{Comparison with centralized baselines.} Table~\ref{tab:complexity} summarizes the comparison. Exhaustive search over $\mathscr{P}_{2}$ evaluates all $\prod_{i}|\mathscr{A}_i|$ one-hot schedules under $L$ scenarios. Exact branch-and-bound on the integer linear program $\mathscr{P}_{3}$ has the same exponential worst case, although it is typically much faster in practice. Centralized QAOA requires all $\hat{N}$ qubits on a single device. SENTINEL requires only about $\hat{N}/K$ qubits per device and reduces the per-round circuit cost by a factor of up to $K^{2}$, at the price of up to $R_{\max}$ coordination rounds. Its main advantage is therefore the reduction in qubit count, which is what makes robust VANGUARD instances of realistic size executable on near-term hardware.

\begin{table}[!h]
\centering
\caption{Complexity Comparison of Solution Methods}
\label{tab:complexity}
\footnotesize
\resizebox{\columnwidth}{!}{%
\begin{tabular}{|l|c|c|}
\hline
\textbf{Method} & \textbf{Qubits per device} & \textbf{Time complexity} \\
\hline
Exhaustive search ($\mathscr{P}_{2}$)
& N/A
& $\mathcal{O}\bigl(L|\mathscr{U}|\prod_{i}|\mathscr{A}_i|\bigr)$ \\
\hline
Branch-and-bound ($\mathscr{P}_{3}$), worst case
& N/A
& $\mathcal{O}\bigl(L|\mathscr{U}|\prod_{i}|\mathscr{A}_i|\bigr)$ \\
\hline
Centralized robust QAOA
& $\hat{N}$
& $\mathcal{O}(I_{\text{QAOA}}\,p\,N_{\text{shot}}\,\hat{N}^{2})$ \\
\hline
SENTINEL (distributed robust QAOA)
& $\hat{N}/K$
& $\mathcal{O}\bigl(R_{\max}I_{\text{QAOA}}\,p\,N_{\text{shot}}\,\hat{N}^{2}/K^{2}\bigr)$ \\
\hline
Threshold policy (non-robust)
& N/A
& $\mathcal{O}(|\mathscr{U}|\,|\mathcal{X}_i|)$ \\
\hline
\end{tabular}%
}
\end{table}

%=======================================================================
\section{Performance Evaluation}
\label{sec:performance}
%=======================================================================

We evaluate SENTINEL through system-level simulation of the satellite-ground integrated topology, with the parameters summarized in Table~\ref{tab:simulation_parameters}. SENTINEL is compared with three baselines: a Random scheduler, which selects the serving node and transmission slots uniformly at random; an AoII-threshold policy, in which each UE transmits whenever its current AoII exceeds a fixed threshold; and a proximal policy optimization (PPO) scheduler trained with a throughput-oriented reward. % TODO: state the PPO reward and training setup explicitly
All results are averaged over UEs, time slots, and independent random seeds, and are reported with 95\% confidence intervals. We first compare SENTINEL with these baselines (Figs.~\ref{Fig2}--\ref{Fig12}) and then validate the solution quality of all solvers against an exact minimax reference (Section~\ref{sec:exact_validation}).

% NOTE: The values below are the recommended settings for rerunning the experiments.
% The table must report the values actually used to generate the figures.
\begin{table}[t]
\centering
\caption{Simulation Parameters}
\label{tab:simulation_parameters}
\renewcommand{\arraystretch}{1.15}
\begin{tabular}{p{0.48\columnwidth} p{0.42\columnwidth}}
\hline
\textbf{Parameter} & \textbf{Value} \\
\hline
\multicolumn{2}{l}{\emph{Network}} \\
Number of UEs & $\{5,10,15,20\}$ \\
Number of BSs & 2 \\
Number of LEO satellites & 2 \\
Satellite altitude & 550 km \\
UE distribution & Uniform over a $20\times20$ km$^2$ area \\
Number of domains $K$ & 2 (one BS per domain; both SATs shared) \\
\hline
\multicolumn{2}{l}{\emph{Channel and resources}} \\
UE transmit power & 23 dBm \\
UE / SAT antenna gain & 30 dBi / 30 dBi \\
BS / satellite carrier frequency & 3.5 GHz / 12 GHz \\
BS / satellite system bandwidth & 20 MHz / 50 MHz \\
Bandwidth levels $\bar\beta_m^q$ ($Q_m=3$) & $\{2.5, 5, 10\}$ MHz \\
Thermal-noise density & $-174$ dBm/Hz \\
Maximum spectral efficiency & 4 bit/s/Hz \\
Minimum satellite elevation angle & $5^\circ$ \\
Status-update packet size $D_i$ & 20 kbits \\
\hline
\multicolumn{2}{l}{\emph{Source and AoII}} \\
Source states $|\mathcal{X}_i|$ & 4 ordered severity levels \\
Transition matrix $P_i$ & Birth-death chain, self-transition probability $\sim\mathcal{U}[0.90,0.99]$ per UE \\
AoII distortion function & $f_i(x,\hat{x})=|x-\hat{x}|$ \\
AoII budget $\Delta_i^{II,\max}$ & 5 \\
Update periods $1/(\bar\rho_i^h T_s)$ ($H_i=4$) & $\{5, 10, 20, 50\}$ slots \\
AoII-threshold baseline & Threshold $=5$, same rate cap $\rho_i^{\max}$ \\
\hline
\multicolumn{2}{l}{\emph{Disruption}} \\
Number of scenarios $L$ & 6 (nominal, mild/severe shadowing, handover of each SAT, joint) \\
Shadowing/handover loss & 6 or 12 dB / complete link loss \\
Disruption window $T^{\mathrm{dis}}_{ij}(\ell)$ & $\{0, 20, 50\}$ slots \\
Disruption probability per SAT link & 10\% per orchestration interval \\
\hline
\multicolumn{2}{l}{\emph{Timing and solver}} \\
Slot duration $T_s$ & 10 ms \\
Orchestration interval & 100 slots (1 s) \\
Simulation duration & 60{,}000 slots (10 min) \\
QAOA depth $p$, optimizer, shots & 2, COBYLA (200 iterations), 1024 \\
Quantum backend & Qiskit Aer simulator \\
ADMM $\varrho$, $\epsilon_{\mathrm{tol}}$, $R_{\max}$ & 1, $10^{-2}$, 20 \\
Edge processing (latency reporting only) & 500 cycles/bit; BS 120, SAT 60 Gcycles/s \\
Independent runs & 30 random seeds \\
\hline
\end{tabular}
\end{table}

\begin{figure}[!htbp]
\centering
\includegraphics[scale=0.46]{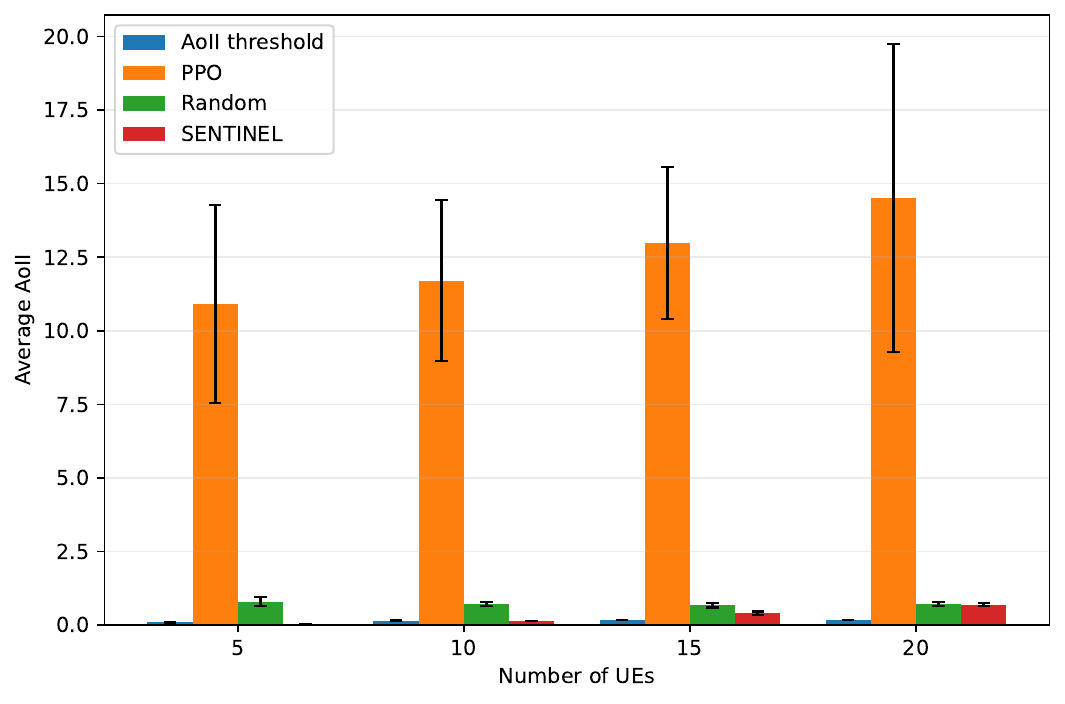}
\caption{Average AoII versus number of UEs.}
\label{Fig2}
\end{figure}
Fig.~\ref{Fig2} shows the time-average AoII as the number of UEs increases from 5 to 20. SENTINEL achieves the lowest AoII at 5 UEs and is comparable to the AoII-threshold policy at 10 UEs, whereas at 15 and 20 UEs the threshold policy attains a lower AoII, and at 20 UEs SENTINEL is comparable to the Random policy. Unlike schedulers that consider only link quality, SENTINEL selects each UE's serving node, bandwidth, and update rate based on the AoII cost of its source for every disruption scenario. As a result, volatile sources receive higher update rates, and UEs are steered away from links whose outages would leave the edge estimate wrong for long periods. The AoII-threshold policy achieves a low AoII across all network sizes because each UE observes its own source and transmits only when its current AoII exceeds the threshold, whereas SENTINEL assigns update rates without observing the source state. The Random policy ignores both the source state and the channel, resulting in a higher AoII. PPO yields the highest AoII because its throughput-oriented reward does not account for estimation error; under congestion, it gives low priority to UEs whose estimates are the most inaccurate.

\begin{figure}[!htbp]
\centering
\includegraphics[scale=0.46]{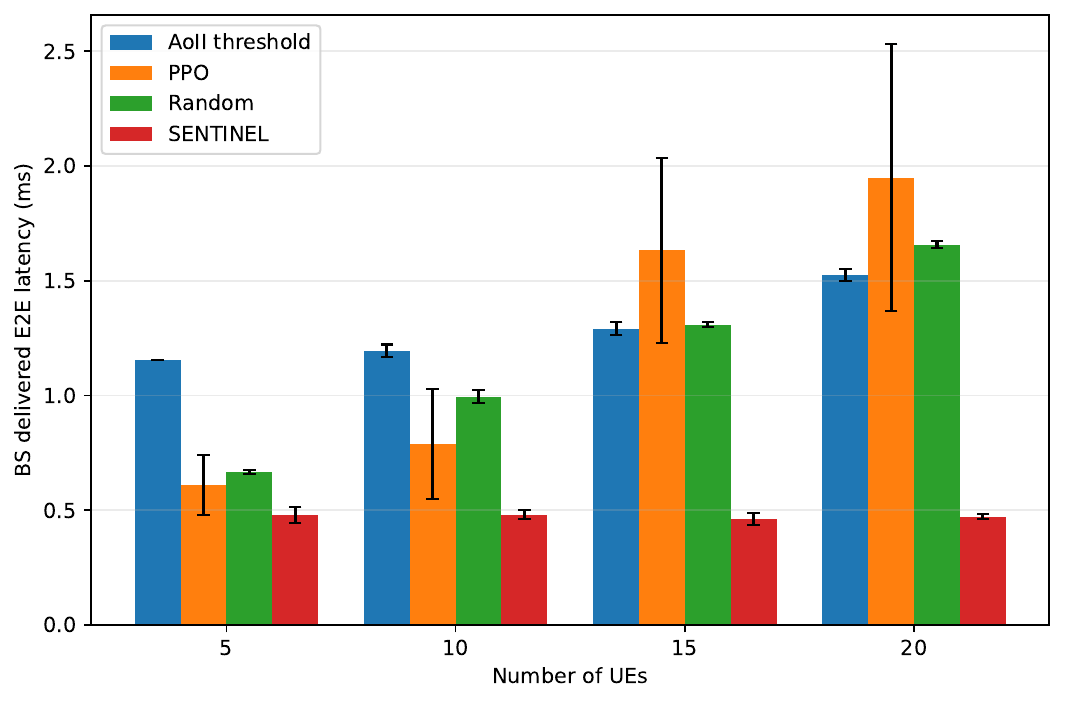}
\caption{Average end-to-end latency of delivered updates versus number of UEs.}
\label{Fig3}
\end{figure}
Fig.~\ref{Fig3} compares the average end-to-end latency of successfully delivered updates, including transmission, propagation, and edge processing delay. Edge processing is modeled only to report this latency and is not part of the optimization. SENTINEL achieves the lowest delivered latency across all network sizes by sending fewer unnecessary updates and assigning each UE to a node with sufficient bandwidth to deliver quickly. The latency of Random scheduling grows steadily with the number of UEs, since more simultaneous transmissions share the same resources and each receives a lower rate. PPO shows a similar increase under heavier load. The AoII-threshold policy has the highest latency at 5 and 10 UEs, but its latency grows more slowly than that of Random and PPO. These results show that reducing AoII does not come at the cost of higher delivery latency.

\begin{figure}[!htbp]
\centering
\includegraphics[scale=0.46]{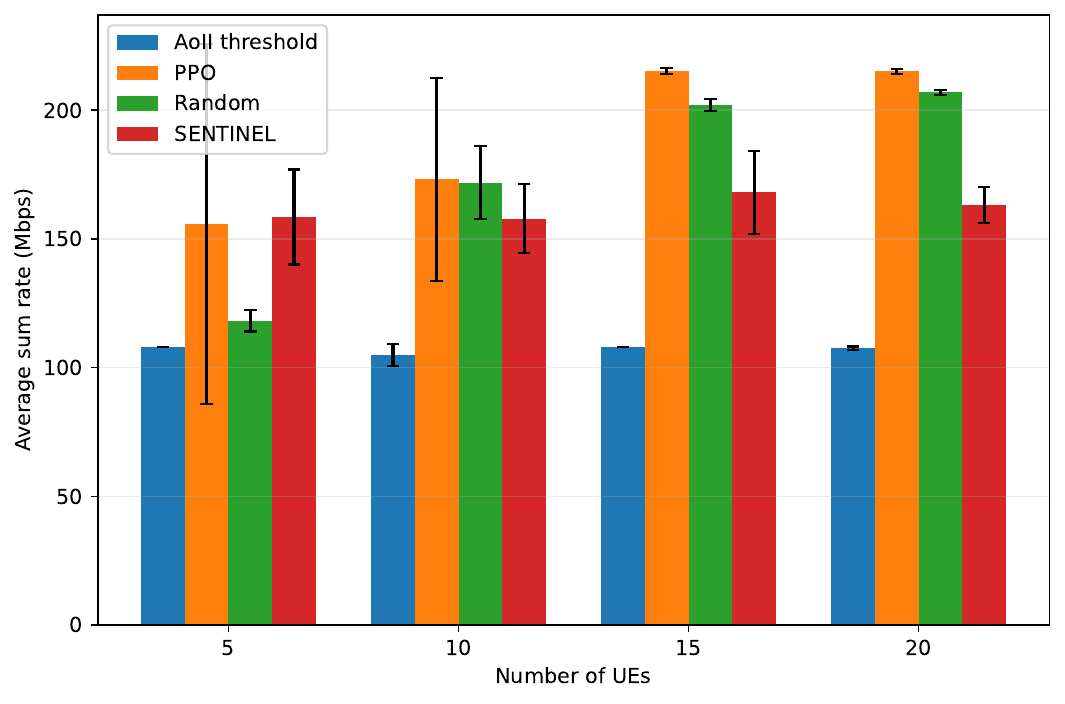}
\caption{Average sum rate versus number of UEs.}
\label{Fig4}
\end{figure}
Fig.~\ref{Fig4} compares the average sum rate. From 10 UEs onward, PPO achieves the highest sum rate, which is consistent with its throughput-oriented reward: it favors UEs with strong channels and transmits whenever resources allow. At 5 UEs, SENTINEL achieves a sum rate comparable to PPO's. Random scheduling also benefits from multi-user diversity as the number of UEs grows. The AoII-threshold policy has the lowest throughput because it deliberately suppresses transmissions. SENTINEL does not maximize throughput, but it maintains a stable rate while keeping AoII low. This illustrates that throughput and estimation correctness are different objectives: many high-rate transmissions carry values that the edge server already holds and therefore do not reduce AoII.

\begin{figure}[!htbp]
\centering
\includegraphics[scale=0.46]{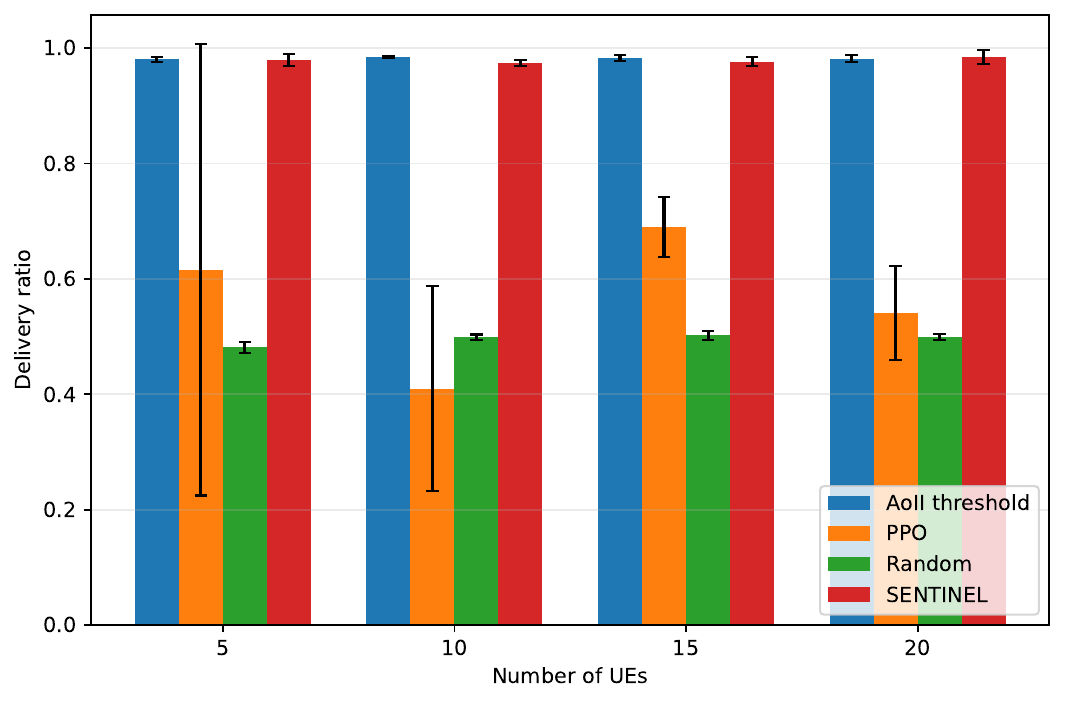}
\caption{Delivery ratio versus number of UEs.}
\label{Fig5}
\end{figure}
Fig.~\ref{Fig5} presents the delivery ratio, defined as the fraction of transmitted updates that are successfully delivered. Both SENTINEL and the AoII-threshold policy maintain high delivery ratios at all network sizes. Because both transmit fewer updates, contention is lower and scheduled updates fail less often. SENTINEL additionally avoids associations that are likely to experience outages under the disruption scenarios. Random scheduling delivers roughly half of its updates because it ignores channel quality, resource availability, and disruptions. PPO varies noticeably across network sizes, which suggests that its learned policy is sensitive to changes in topology and load. These results indicate that accounting for estimation correctness also improves reliability by avoiding unnecessary contention and outage-prone links.

\begin{figure}[!htbp]
\centering
\includegraphics[scale=0.46]{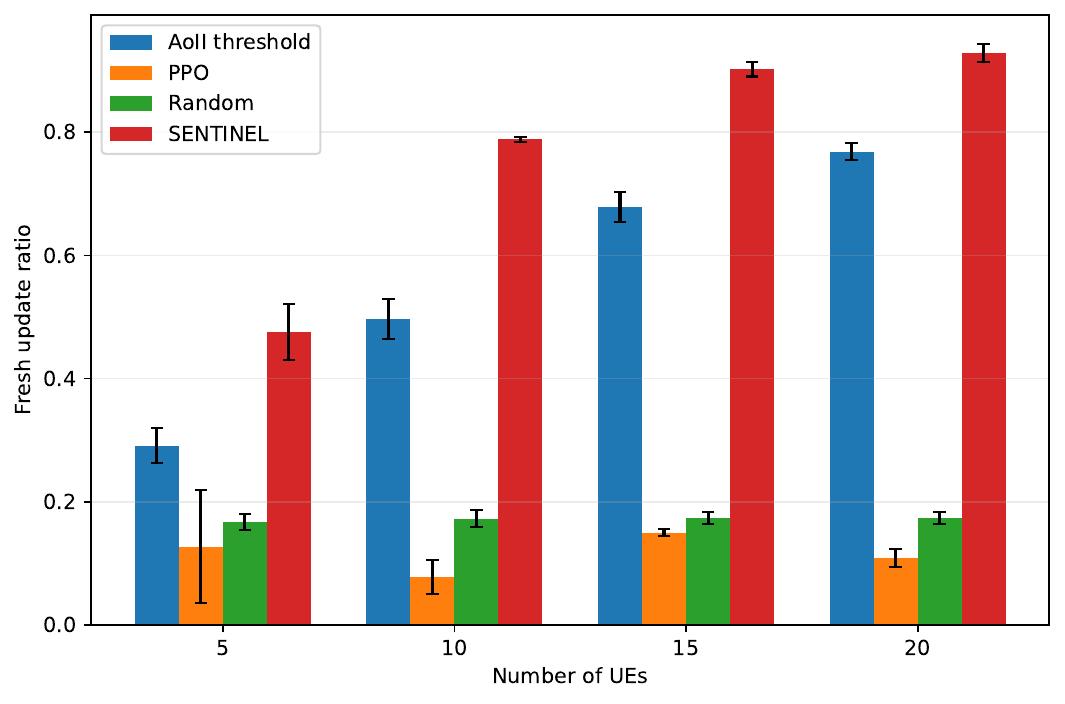}
\caption{Informative update ratio versus number of UEs.}
\label{Fig6}
\end{figure}
Fig.~\ref{Fig6} shows the informative update ratio, defined as the fraction of delivered updates that change the edge server's estimate, that is, updates for which $X_i(t_{i,\kappa})$ differs from the estimate held just before delivery. Only such updates reduce AoII, so this metric measures how useful each transmission is rather than whether it was delivered. SENTINEL achieves the highest ratio because it matches each UE's update rate to the volatility of its source, so that slowly varying sources are not sampled more often than needed. The AoII-threshold policy also performs well, since it transmits only after the estimate has become wrong. Random scheduling has an almost constant ratio because its transmissions are independent of the source state. PPO has the lowest ratio because many of its delivered updates carry values that the edge server already holds. These results show that delivering many packets is not sufficient for correctness-aware monitoring; transmissions should be directed to the updates that actually correct the estimate.

\begin{figure}[!htbp]
\centering
\includegraphics[scale=0.46]{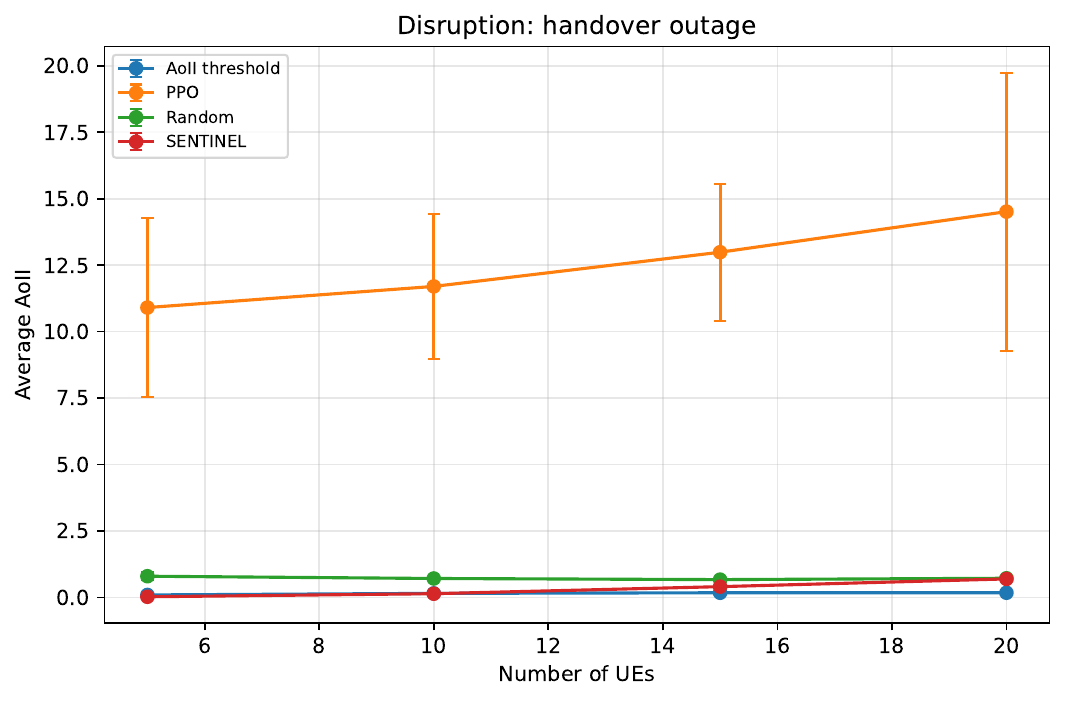}
\caption{Average AoII versus number of UEs under handover outage.}
\label{Fig8}
\end{figure}
Fig.~\ref{Fig8} compares the average AoII in the handover-outage scenario as the number of UEs increases. SENTINEL maintains a low AoII at small network sizes, indicating that edge estimates remain accurate despite intermittent satellite disruptions. The AoII-threshold policy achieves a comparable AoII at 5 and 10 UEs and a lower AoII at 15 and 20 UEs. This is expected because each UE observes its own source and transmits as soon as its estimate becomes wrong, whereas SENTINEL assigns update rates without observing the source state. The threshold policy, however, relies on a manually chosen threshold and ignores link quality, bandwidth availability, and disruption uncertainty, so it provides no bound on AoII during outages. SENTINEL jointly selects the serving node, bandwidth, and update rate for each UE across all disruption scenarios and provides a certified worst-case AoII for the resulting schedule. PPO yields the highest AoII because its throughput-oriented reward does not account for estimation error. The Random policy performs in between, since it considers neither channel conditions nor the source state.

\begin{figure}[!htbp]
\centering
\includegraphics[scale=0.46]{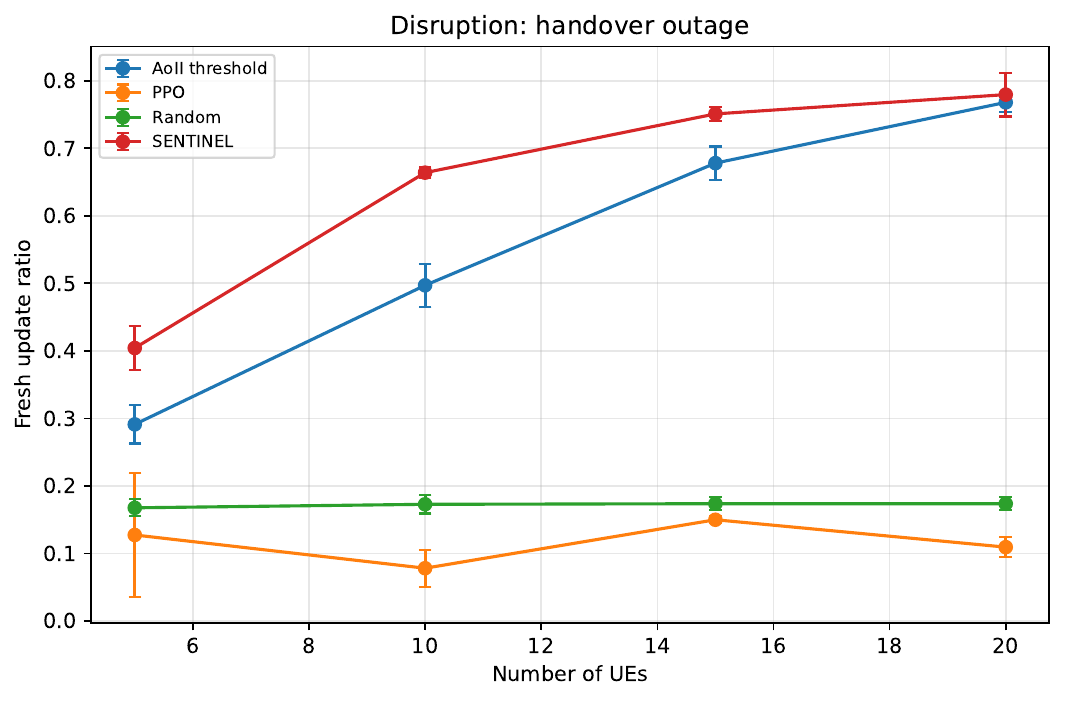}
\caption{Informative update ratio versus number of UEs under handover outage.}
\label{Fig12}
\end{figure}
Fig.~\ref{Fig12} compares the informative update ratio under the handover outage scenario, that is, the fraction of delivered updates that change the edge server's estimate. SENTINEL achieves the highest ratio at every network size, with the AoII-threshold policy statistically tied at 20 UEs, so most of its delivered updates correct the estimate rather than repeat a value the server already holds. This follows from matching each UE's update rate to the volatility of its source and from avoiding links that are likely to fall into outage, so that fewer transmissions are spent on slowly varying sources or lost during disruption. The AoII-threshold policy also performs well, approaching SENTINEL as the number of UEs grows. Because the threshold rule does not coordinate UEs, several UEs with incorrect estimates may transmit simultaneously and compete for the same resources, and updates delayed by this contention can arrive after the source has returned to the estimated value. PPO and Random have much lower ratios because many of their delivered updates carry values that the edge server already holds. These results indicate that SENTINEL uses network resources more effectively than the baselines when satellite links are disrupted.

%-----------------------------------------------------------------------
\subsection{Validation Against the Exact Minimax Optimum}
\label{sec:exact_validation}
%-----------------------------------------------------------------------

To quantify how far each solver is from the robust optimum, we compare it with an exact reference. Enumerating the QUBO of Section~\ref{sec:qubo} is impractical even for small instances: with 3 UEs, 4 serving nodes, and $L=10$ disruption scenarios, the epigraph and slack encodings already yield $\hat{N}=42$ binary variables, most of which are auxiliary. We therefore compute the reference directly from the problem's combinatorial structure. When each UE uses a fixed bandwidth and update rate, and each serving node can host a limited number of UEs, minimizing the largest scenario-worst-case AoII cost over all UE-node associations is a \emph{bottleneck assignment} problem. We solve it exactly by searching over the distinct worst-case cost values and testing feasibility at each candidate threshold with a bipartite max-flow computation. The solver required 7.7~ms on an instance with 550 UEs and 15 serving nodes. We cross-validated it against brute-force enumeration of all $M^{I}$ associations, where $I$ is the number of UEs and $M$ the number of serving nodes, on 500 random instances, with no objective mismatches and no capacity violations. An exact reference is therefore available at every network scale considered in this paper, against which we compare SENTINEL with three classical solvers: greedy association (Greedy), greedy association refined by local search (Greedy+LS), and simulated annealing (SA) on the monolithic QUBO.
% TODO: replace or complement with an exact ILP solution of P3, which matches the paper's objective

\begin{table}[!t]
\centering
\caption{Mean Optimality Gap (\%) Above the Exact Minimax Optimum}
\label{tab:optimality_gap}
\footnotesize
\resizebox{\columnwidth}{!}{%
\begin{tabular}{|l|c|c|c|c|}
\hline
\textbf{Evaluation setting} & \textbf{Greedy} & \textbf{Greedy+LS}
& \textbf{SA} & \textbf{SENTINEL} \\
\hline
Small-scale validation (3--8 UEs) & 27.72 & 2.17 & 12.49 & 2.17 \\
\hline
Network-size sweep (10--100 UEs) & 22.65 & 6.93 & 34.10 & 6.93 \\
\hline
Disruption-severity sweep & 21.91 & 4.30 & 55.11 & 4.30 \\
\hline
Multi-scenario robustness ($L\in\{1,3,5\}$) & 40.10 & 16.18 & 60.91 & 16.18 \\
\hline
\end{tabular}%
}
\end{table}

Table~\ref{tab:optimality_gap} reports the mean optimality gap, defined as the percentage by which the worst-case AoII of each solver exceeds the exact minimax optimum. It covers four settings: small-scale validation (3 to 8 UEs), a network-size sweep (10 to 100 UEs), a disruption-severity sweep from mild to extreme, and a multi-scenario sweep with $L\in\{1,3,5\}$ disruption scenarios. Three observations follow. First, SENTINEL is near-optimal at a small scale, remaining within 2.2\% of the exact optimum, and it attains the exact optimum on individual instances with up to 50 UEs in the size and severity sweeps.
Second, plain greedy association consistently performs 20\% to 40\% above the optimum, indicating that disruption-aware minimax association is a hard combinatorial problem for which myopic policies incur substantial AoII penalties. Third, the gap of every solver grows with network size: at 100 UEs, even greedy association with local search remains 23.1\% above the optimum, which indicates remaining room for improvement at scale.

%=======================================================================
\section{Conclusion}
\label{sec:conclusion}
%=======================================================================

This paper studied correctness-aware edge orchestration in SAGIN under stochastic handover and shadowing. We formulated the robust VANGUARD problem, which minimizes the worst-case time-average AoII across all monitored UEs, and used a renewal-interval decomposition to express the expected AoII for each inter-delivery interval in closed form, thereby making this history-dependent metric compatible with a QUBO. The proposed SENTINEL framework solves the domain-level QUBOs using a hybrid QAOA, reconciles shared satellite bandwidth via ADMM, and returns every schedule with an exactly evaluated worst-case AoII. In simulation, SENTINEL achieved a lower AoII than the PPO baseline at every network size and than random scheduling in most settings, together with lower delivered latency and the highest informative update ratio in nearly all settings. A state-aware threshold policy attained a comparable or lower AoII in networks of 10 or more UEs but offers no worst-case guarantee. Against the exact minimax reference, SENTINEL remained within 2.2\% of the optimum at a small scale.

Future work will close the optimality gap that grows with network size through stronger inter-domain coordination, execute the domain-level circuits on quantum hardware, and combine the robust schedule with state-aware update triggering.

%=======================================================================
\bibliographystyle{IEEEtran}
\bibliography{main}

@inproceedings{Kaul2012,
  author = {Kaul, Sanjit and Yates, Roy and Gruteser, Marco},
  title = {Real-Time Status: How Often Should One Update?},
  booktitle = {Proc. IEEE INFOCOM},
  address = {Orlando, FL, USA},
  month = mar,
  year = {2012},
  pages = {2731--2735}
}

@article{Yates2021,
  author = {Yates, Roy D. and Sun, Yin and Brown, D. Richard and Kaul, Sanjit K. and Modiano, Eytan and Ulukus, Sennur},
  title = {Age of Information: An Introduction and Survey},
  journal = {IEEE J. Sel. Areas Commun.},
  volume = {39},
  number = {5},
  pages = {1183--1210},
  month = may,
  year = {2021}
}

@article{Maatouk2020,
  author = {Maatouk, Ali and Kriouile, Saad and Assaad, Mohamad and Ephremides, Anthony},
  title = {The Age of Incorrect Information: A New Performance Metric for Status Updates},
  journal = {IEEE/ACM Trans. Netw.},
  volume = {28},
  number = {5},
  pages = {2215--2228},
  month = oct,
  year = {2020}
}

@article{Maatouk2023,
  author = {Maatouk, Ali and Assaad, Mohamad and Ephremides, Anthony},
  title = {The Age of Incorrect Information: An Enabler of Semantics-Empowered Communication},
  journal = {IEEE Trans. Wireless Commun.},
  volume = {22},
  number = {4},
  pages = {2621--2635},
  month = apr,
  year = {2023}
}

@inproceedings{Kriouile2021,
  author = {Kriouile, Saad and Assaad, Mohamad},
  title = {Minimizing the Age of Incorrect Information for Real-Time Tracking of {Markov} Remote Sources},
  booktitle = {Proc. IEEE Int. Symp. Inf. Theory (ISIT)},
  month = jul,
  year = {2021},
  pages = {2978--2983}
}

@article{Liu2018,
  author = {Liu, Jiajia and Shi, Yongpeng and Fadlullah, Zubair Md. and Kato, Nei},
  title = {Space-Air-Ground Integrated Network: A Survey},
  journal = {IEEE Commun. Surveys Tuts.},
  volume = {20},
  number = {4},
  pages = {2714--2741},
  note = {4th Quart.},
  year = {2018}
}

@article{Zhang2019,
  author = {Zhang, Zhenjiang and Zhang, Wenyu and Tseng, Fan-Hsun},
  title = {Satellite Mobile Edge Computing: Improving {QoS} of High-Speed Satellite-Terrestrial Networks Using Edge Computing Techniques},
  journal = {IEEE Netw.},
  volume = {33},
  number = {1},
  pages = {70--76},
  month = jan,
  year = {2019}
}

@article{Mao2017,
  author = {Mao, Yuyi and You, Changsheng and Zhang, Jun and Huang, Kaibin and Letaief, Khaled B.},
  title = {A Survey on Mobile Edge Computing: The Communication Perspective},
  journal = {IEEE Commun. Surveys Tuts.},
  volume = {19},
  number = {4},
  pages = {2322--2358},
  note = {4th Quart.},
  year = {2017}
}

@book{BenTal2009,
  author = {Ben-Tal, Aharon and El Ghaoui, Laurent and Nemirovski, Arkadi},
  title = {Robust Optimization},
  publisher = {Princeton Univ. Press},
  address = {Princeton, NJ, USA},
  year = {2009}
}

@article{Bertsimas2004,
  author = {Bertsimas, Dimitris and Sim, Melvyn},
  title = {The Price of Robustness},
  journal = {Oper. Res.},
  volume = {52},
  number = {1},
  pages = {35--53},
  month = feb,
  year = {2004}
}

@article{Mnih2015,
  author = {Mnih, Volodymyr and Kavukcuoglu, Koray and Silver, David and Rusu, Andrei A. and Veness, Joel and Bellemare, Marc G. and Graves, Alex and Riedmiller, Martin and Fidjeland, Andreas K. and Ostrovski, Georg and Petersen, Stig and Beattie, Charles and Sadik, Amir and Antonoglou, Ioannis and King, Helen and Kumaran, Dharshan and Wierstra, Daan and Legg, Shane and Hassabis, Demis},
  title = {Human-Level Control Through Deep Reinforcement Learning},
  journal = {Nature},
  volume = {518},
  number = {7540},
  pages = {529--533},
  month = feb,
  year = {2015}
}

@misc{Schulman2017,
  author = {Schulman, John and Wolski, Filip and Dhariwal, Prafulla and Radford, Alec and Klimov, Oleg},
  title = {Proximal Policy Optimization Algorithms},
  howpublished = {arXiv:1707.06347},
  year = {2017}
}

@misc{Farhi2014,
  author = {Farhi, Edward and Goldstone, Jeffrey and Gutmann, Sam},
  title = {A Quantum Approximate Optimization Algorithm},
  howpublished = {arXiv:1411.4028},
  year = {2014}
}

@article{Lucas2014,
  author = {Lucas, Andrew},
  title = {Ising Formulations of Many {NP} Problems},
  journal = {Front. Phys.},
  volume = {2},
  pages = {5},
  month = feb,
  year = {2014}
}

@article{Albash2018,
  author = {Albash, Tameem and Lidar, Daniel A.},
  title = {Adiabatic Quantum Computation},
  journal = {Rev. Mod. Phys.},
  volume = {90},
  number = {1},
  pages = {015002},
  month = jan,
  year = {2018}
}

@article{Preskill2018,
  author = {Preskill, John},
  title = {Quantum Computing in the {NISQ} Era and Beyond},
  journal = {Quantum},
  volume = {2},
  pages = {79},
  month = aug,
  year = {2018}
}

@inproceedings{Kim2019,
  author = {Kim, Minsung and Venturelli, Davide and Jamieson, Kyle},
  title = {Leveraging Quantum Annealing for Large {MIMO} Processing in Centralized Radio Access Networks},
  booktitle = {Proc. ACM SIGCOMM},
  address = {Beijing, China},
  month = aug,
  year = {2019},
  pages = {241--255}
}

@article{Boyd2011,
  author = {Boyd, Stephen and Parikh, Neal and Chu, Eric and Peleato, Borja and Eckstein, Jonathan},
  title = {Distributed Optimization and Statistical Learning via the Alternating Direction Method of Multipliers},
  journal = {Found. Trends Mach. Learn.},
  volume = {3},
  number = {1},
  pages = {1--122},
  year = {2011}
}

@article{Saad2020,
  author = {Saad, Walid and Bennis, Mehdi and Chen, Mingzhe},
  title = {A Vision of {6G} Wireless Systems: Applications, Trends, Technologies, and Open Research Problems},
  journal = {IEEE Netw.},
  volume = {34},
  number = {3},
  pages = {134--142},
  month = may,
  year = {2020}
}

@article{Giordani2021,
  author = {Giordani, Marco and Zorzi, Michele},
  title = {Non-Terrestrial Networks in the {6G} Era: Challenges and Opportunities},
  journal = {IEEE Netw.},
  volume = {35},
  number = {2},
  pages = {244--251},
  month = mar,
  year = {2021}
}

@article{Kato2019,
  author = {Kato, Nei and Fadlullah, Zubair Md. and Tang, Fengxiao and Mao, Bomin and Tani, Shigenori and Okamura, Atsushi and Liu, Jiajia},
  title = {Optimizing Space-Air-Ground Integrated Networks by Artificial Intelligence},
  journal = {IEEE Wireless Commun.},
  volume = {26},
  number = {4},
  pages = {140--147},
  month = aug,
  year = {2019}
}

@article{Cheng2019,
  author = {Cheng, Nan and Lyu, Feng and Quan, Wei and Zhou, Conghao and He, Hongli and Shi, Weisen and Shen, Xuemin},
  title = {Space/Aerial-Assisted Computing Offloading for {IoT} Applications: A Learning-Based Approach},
  journal = {IEEE J. Sel. Areas Commun.},
  volume = {37},
  number = {5},
  pages = {1117--1129},
  month = may,
  year = {2019}
}

@article{Zhou2021,
  author = {Zhou, Conghao and Wu, Wen and He, Hongli and Yang, Peng and Lyu, Feng and Cheng, Nan and Shen, Xuemin},
  title = {Deep Reinforcement Learning for Delay-Oriented {IoT} Task Scheduling in {SAGIN}},
  journal = {IEEE Trans. Wireless Commun.},
  volume = {20},
  number = {2},
  pages = {911--925},
  month = feb,
  year = {2021}
}

@article{Kosta2017,
  author = {Kosta, Antzela and Pappas, Nikolaos and Angelakis, Vangelis},
  title = {Age of Information: A New Concept, Metric, and Tool},
  journal = {Found. Trends Netw.},
  volume = {12},
  number = {3},
  pages = {162--259},
  year = {2017}
}

@article{Sun2017,
  author = {Sun, Yin and Uysal-Biyikoglu, Elif and Yates, Roy D. and Koksal, C. Emre and Shroff, Ness B.},
  title = {Update or Wait: How to Keep Your Data Fresh},
  journal = {IEEE Trans. Inf. Theory},
  volume = {63},
  number = {11},
  pages = {7492--7508},
  month = nov,
  year = {2017}
}

@article{Kadota2018,
  author = {Kadota, Igor and Sinha, Abhishek and Uysal-Biyikoglu, Elif and Singh, Rahul and Modiano, Eytan},
  title = {Scheduling Policies for Minimizing Age of Information in Broadcast Wireless Networks},
  journal = {IEEE/ACM Trans. Netw.},
  volume = {26},
  number = {6},
  pages = {2637--2650},
  month = dec,
  year = {2018}
}

@inproceedings{Chen2021,
  author = {Chen, Yutao and Ephremides, Anthony},
  title = {Minimizing Age of Incorrect Information for Unreliable Channel With Power Constraint},
  booktitle = {Proc. IEEE Global Commun. Conf. (GLOBECOM)},
  month = dec,
  year = {2021},
  pages = {1--6}
}

@article{Strinati2021,
  author = {Strinati, Emilio Calvanese and Barbarossa, Sergio},
  title = {{6G} Networks: Beyond {Shannon} Towards Semantic and Goal-Oriented Communications},
  journal = {Comput. Netw.},
  volume = {190},
  pages = {107930},
  month = may,
  year = {2021}
}

@article{Xie2021,
  author = {Xie, Huiqiang and Qin, Zhijin and Li, Geoffrey Ye and Juang, Biing-Hwang},
  title = {Deep Learning Enabled Semantic Communication Systems},
  journal = {IEEE Trans. Signal Process.},
  volume = {69},
  pages = {2663--2675},
  year = {2021}
}

@article{Glover2019,
  author = {Glover, Fred and Kochenberger, Gary and Du, Yu},
  title = {Quantum Bridge Analytics {I}: A Tutorial on Formulating and Using {QUBO} Models},
  journal = {4OR},
  volume = {17},
  number = {4},
  pages = {335--371},
  month = dec,
  year = {2019}
}

@article{Cerezo2021,
  author = {Cerezo, M. and Arrasmith, Andrew and Babbush, Ryan and Benjamin, Simon C. and Endo, Suguru and Fujii, Keisuke and McClean, Jarrod R. and Mitarai, Kosuke and Yuan, Xiao and Cincio, Lukasz and Coles, Patrick J.},
  title = {Variational Quantum Algorithms},
  journal = {Nat. Rev. Phys.},
  volume = {3},
  number = {9},
  pages = {625--644},
  month = sep,
  year = {2021}
}

\end{document}